\definecolor{ceruleanblue}{rgb}{0.16, 0.32, 0.75}
\definecolor{darkmidnightblue}{rgb}{0.0, 0.2, 0.4}
\definecolor{darkpastelgreen}{rgb}{0.01, 0.75, 0.24}
\definecolor{bleudefrance}{rgb}{0.19, 0.55, 0.91}
\newcommand{\bv}[1]{\mathbf{#1}}
\newcommand{\Var}{\operatorname{Var}}
\newcommand{\diag}{\operatorname{diag}}
\DeclareMathOperator*{\E}{\mathbb{E}}
\DeclareMathOperator*{\R}{\mathbb{R}}
\DeclareMathOperator*{\argmin}{arg\,min}
\DeclareMathOperator{\tr}{tr}
\newtheorem*{rep@theorem}{\rep@title}
\newcommand{\newreptheorem}[2]{%
	\newenvironment{rep#1}[1]{%
		\def\rep@title{\Cref{##1}}%
		\begin{rep@theorem}}%
		{\end{rep@theorem}}}
\newtheorem{theorem}{Theorem}
\newtheorem{lemma}[theorem]{Lemma}
\newtheorem{definition}{Definition}
\newtheorem{corollary}[theorem]{Corollary}
\newtheorem{fact}[theorem]{Fact}
  \newcommand{\cAAAI}[1]{AAAI\ Conference\ on\ Artificial (AAAI)}
\title{A Tight Analysis of Hutchinson's Diagonal Estimator}
\author{
	Prathamesh Dharangutte \\ Rutgers University\\ \texttt{prathamesh.d@rutgers.edu}
	\and 
	Christopher Musco\\ New York University\\ \texttt{cmusco@nyu.edu}
}
\date{\today}
\begin{document}
\maketitle
\begin{abstract}
	Let $\bv{A}\in \R^{n\times n}$ be a matrix with diagonal $\diag(\bv{A}) \in \R^n$. We show that the simple and practically popular Hutchinson's estimator, run for $m$ trials, returns a diagonal estimate $\tilde{d}\in \R^n$ such that with probability $1-\delta$, 
	\begin{align*}
		\|\tilde{d} - \diag(\bv{A})\|_2 \leq c\sqrt{\frac{\log(2/\delta)}{m}}\|\bar{\bv{A}}\|_F.
	\end{align*}
	Above $c$ is a fixed constant 
	and $\bar{\bv{A}}$ equals $\bv{A}$ with its diagonal set to zero.
	This result improves on recent work in \cite{BastonNakatsukasa:2022} by a $\log(n)$ factor, yielding a bound that is independent of the matrix dimension, $n$.
	We show a similar bound for variants of Hutchinson's estimator that use non-Rademacher random vectors. 
	
\end{abstract}

\section{Introduction}
We give a short and tight analysis of the popular Hutchinson's estimator for approximating the diagonal of a square matrix, $\bv{A}$, given only \emph{implicit} matrix-vector multiplication access to the matrix \cite{Hutchinson:1990,BekasKokiopoulouSaad:2007}. 
\begin{definition}[Hutchinson's Diagonal Estimator] 
	\label{def:diag_est}
	Let $\bv{g}^{1},\ldots, \bv{g}^{m} \in \{-1,+1\}^n$ be independent random vectors, each with i.i.d. Rademacher (random $\pm 1$) entries. Hutchinson's diagonal estimator $\bv{r}^m(\bv{A})  \in \R^n$ is:
	\begin{align*}
		\bv{r}^m(\bv{A}) = \frac{1}{m}\sum_{z=1}^m \bv{g}^{z}\odot \bv{A}\bv{g}^{z},
	\end{align*}
where $\bv{a}\odot \bv{b} \in \R^n$ denotes the Hadamard product (entrywise product) between vectors $\bv{a}\in \R^n$ and $\bv{b}\in \R^n$. 
\end{definition}
Computing $\bv{r}^m(\bv{A})$ requires $m$ matrix-vector multiplications with $\bv{A}$, and it is not hard to check that it is an unbiased estimator for the diagonal, i.e., $\E[\bv{r}^m(\bv{A})] = \diag(\bv{A})$, where $\diag(\bv{A})$ is a vector containing $\bv{A}$'s diagonal elements. 
Hutchinson's diagonal estimator is simple to implement and is widely applied across applications in computational science \cite{AsterBorchersThurber:2019,MetivierBretaudeauBrossier:2014}, machine learning \cite{Pavlo-Molchanov:2017,ErikssonDongLee:2018}, and optimization \cite{YaoGholamiShen:2021,DauphinVriesBengio:2015}. In these applications, it is used to estimate the diagonals of large Hessian matrices, matrix inverses, and other matrices that are expensive to construct explicitly, but for which matrix-vector multiplications can be implemented quickly (e.g. using backpropagation or an iterative linear system solver).

However, despite its popularity, there has been a lack of theoretical work on Hutchinson's diagonal estimator, and in particular on the question of how large $m$ should be so that $\bv{r}^m(\bv{A})$ concentrates around its expectation. This is in contrast to the closely related Hutchinson's  \emph{trace} estimator, which has been heavily studied and for which a tight analysis is known \cite{Roosta-KhorasaniAscher:2015,CortinovisKressner:2022,MeyerMuscoMusco:2021,WoodruffZhangZhang:2022}.

Two recent papers do provide bounds for the diagonal estimation problem \cite{HallmanIpsenSaibaba:2022,BastonNakatsukasa:2022}. The second proves that if $m = O(\log(n/\delta)/\epsilon^2)$, then with probability $1-\delta$, 	$\|\bv{r}^m(\bv{A}) - \diag(\bv{A})\|_2 \leq \epsilon \|\bar{\bv{A}}\|_F$, where $\|\bar{\bv{A}}\|_F^2 = \|\bv{A}\|_F^2 - \|\diag(\bv{A}) \|_2^2$ denotes the squared Frobenius norm of $\bv{A}$ with its diagonal entries set to $0$. 
Our goal is to tighten the analysis of \cite{BastonNakatsukasa:2022} by removing the $\log(n)$,  i.e., to prove that to achieve error $\epsilon \|\bar{\bv{A}}\|_F$, just  $m = O\left(\log(1/\delta)/\epsilon^2\right)$ matrix-vector products are necessary. Formally, we prove:
\begin{theorem}[Main Theorm]
	\label{thm:main}
 For any $\delta \in (0,1]$ and $m\geq 1$, with probability $1-\delta$:
	\begin{align*}
		\|\bv{r}^m(\bv{A}) - \diag(\bv{A})\|_2 \leq c\sqrt{\frac{\log(2/\delta)}{m}}\|\bar{\bv{A}}\|_F,
	\end{align*}
where $c$ is an absolute constant independent of $\bv{A}$ and all other problem parameters. 
\end{theorem}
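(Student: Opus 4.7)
The plan is to first reduce to the case where $\bv{A}$ has zero diagonal, and then to prove the bound by combining a crude Cauchy--Schwarz argument (tight for small $m$) with a decoupling-based moment argument (tight for large $m$). Since $(g_i^z)^2=1$ for all $i,z$, a direct expansion shows $\bv{r}^m(\bv{A}) - \diag(\bv{A}) = \bv{r}^m(\bar{\bv{A}})$ pointwise, so writing $\bv{B} := \bar{\bv{A}}$ it suffices to prove $\|\bv{r}^m(\bv{B})\|_2 \leq c\sqrt{\log(2/\delta)/m}\,\|\bv{B}\|_F$.

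\textbf{Case $m \leq \log(2/\delta)$.} I would apply Cauchy--Schwarz to $m\bv{r}^m(\bv{B}) = \sum_z \bv{g}^z\odot\bv{B}\bv{g}^z$, using the identity $\|\bv{g}^z\odot\bv{B}\bv{g}^z\|_2 = \|\bv{B}\bv{g}^z\|_2$ (the Hadamard product with a $\pm 1$ vector preserves $\ell_2$-norm), to obtain $\|\bv{r}^m(\bv{B})\|_2^2 \leq \tfrac{1}{m}\sum_z\|\bv{B}\bv{g}^z\|_2^2 =: T_1/m$. Each $\|\bv{B}\bv{g}^z\|_2^2$ is a Rademacher quadratic form with mean $\|\bv{B}\|_F^2$ and Hanson--Wright parameters $\|\bv{B}^{\top}\bv{B}\|_F,\,\|\bv{B}^{\top}\bv{B}\|_2 \leq \|\bv{B}\|_F^2$; Bernstein for sums of independent such forms then yields $T_1 \leq C(m+\log(2/\delta))\|\bv{B}\|_F^2$ with probability $1-\delta$. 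Since $\log(2/\delta)/m \geq 1$ in this regime, this gives $T_1/m \leq 2C\log(2/\delta)\|\bv{B}\|_F^2/m$, matching the claim.

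\textbf{Case $m > \log(2/\delta)$.} Cauchy--Schwarz loses a factor of order $\sqrt{m/\log(2/\delta)}$ here, so one must exploit the cancellation in $\sum_z \bv{g}^z\odot\bv{B}\bv{g}^z$. The key observation is that, because $\bv{B}$ has zero diagonal, each coordinate $(m\bv{r}^m)_i = \sum_{z,\,j\neq i}B_{ij}g_i^z g_j^z$ is a tetrahedral Rademacher chaos of degree $2$ in the $mn$ sign variables. I would invoke the de la Pe\~na--Gin\'e decoupling inequality for $\ell_2^n$-valued tetrahedral chaos, giving $\E\|\bv{r}^m(\bv{B})\|_2^{2p} \leq C^{2p}\,\E\|\tilde{\bv{r}}^m\|_2^{2p}$ for every integer $p\geq 1$, where $\tilde r^m_i := \tfrac1m\sum_z \tilde g_i^z (\bv{B}\bv{g}^z)_i$ replaces the \emph{outer} sign $g_i^z$ by an independent copy. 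Conditionally on $(\bv{g}^z)_z$, the coordinates of $\tilde{\bv{r}}^m$ are then independent Rademacher sums (since the $\tilde g_i^z$'s are independent across $i$), so $\|\tilde{\bv{r}}^m\|_2^2$ is a block-diagonal PSD quadratic form in the $\tilde g$'s with trace $T_1/m^2$ and both Frobenius and spectral norms at most $T_1/m^2$. Conditional Hanson--Wright gives $\E[\|\tilde{\bv{r}}^m\|_2^{2p} \mid (\bv{g}^z)_z]^{1/p} \leq Cp\,T_1/m^2$, and combining with $\E[T_1^p]^{1/p}\leq C(m+p)\|\bv{B}\|_F^2$ yields $\|\tilde{\bv{r}}^m\|_{L^{2p}} \leq C\sqrt{p/m}\,\|\bv{B}\|_F$ whenever $p \leq m$. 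Choosing $p = \tfrac12\log(2/\delta)$ and applying Markov then closes the case.

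\textbf{Main obstacle.} The crux will be avoiding the $\log n$ factor inherent in the per-coordinate Hanson--Wright plus union-bound argument of \cite{BastonNakatsukasa:2022}. Decoupling is what makes this possible: once the outer signs $g_i^z$ are replaced by independent copies $\tilde g_i^z$, the coordinates of the surrogate $\tilde{\bv{r}}^m$ are conditionally independent, so the squared norm $\|\tilde{\bv{r}}^m\|_2^2$ concentrates at the usual sub-exponential rate---governed only by the trace and operator norm of a block-diagonal quadratic form---with no dependence on $n$. The remaining technical care is bookkeeping: combining the conditional Hanson--Wright estimate with the unconditional moment bound on $T_1$ so that the decoupling-based and Cauchy--Schwarz arguments stitch together cleanly with a single absolute constant $c$ across both regimes.
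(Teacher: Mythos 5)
Your plan is sound and, after a little care with the decoupling step (one should symmetrize the coefficient kernel before applying the de la Pe\~na--Gin\'e inequality, then note the resulting two halves are exchangeable and use convexity to pass to the one-sided surrogate $\tilde r^m_i = \tfrac1m\sum_z\tilde g_i^z(\bv{B}\bv{g}^z)_i$), it yields the theorem. But the route is genuinely different from the paper's. You attack the dimension-free concentration via order-two chaos decoupling together with a case split in $m$: for $m\le\log(2/\delta)$ the crude Cauchy--Schwarz plus Bernstein for $T_1=\sum_z\|\bar{\bv{A}}\bv{g}^z\|_2^2$ already suffices, while for $m>\log(2/\delta)$ you decouple the outer signs, observe the conditional block-diagonal PSD structure of $\|\tilde{\bv{r}}^m\|_2^2$, apply conditional Hanson--Wright, and then integrate out the $T_1$ moments (the $p\le m$ constraint is what forces the case split). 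The paper instead uses a single argument for all $m$: it symmetrizes $\bv{e}=\sum_z\bv{e}_z$ to $\tilde{\bv{e}}=\sum_z\bv{w}_z$ with symmetric increments, and then proves a term-by-term moment comparison $\E[\|\tilde{\bv{e}}\|_2^{2k}]\le\E[(\sum_z r_z\|\bv{w}_z\|_2)^{2k}]$ -- a scalar comparison in the spirit of Yurinskii's vector Bernstein lemma -- after which the problem reduces to a one-dimensional sub-Gaussian sum with $\|\bv{w}_z\|_2$ shown sub-Gaussian through the same Hutchinson-trace/Hanson--Wright connection you use for $T_1$. Both proofs rely on the key identity that $\|\bv{g}^z\odot\bar{\bv{A}}\bv{g}^z\|_2^2$ is a Rademacher quadratic form in $\bar{\bv{A}}^T\bar{\bv{A}}$; the paper's scalar-comparison trick sidesteps decoupling and the case split at the cost of seeming a bit ad hoc, while your decoupling route is systematic and arguably more transparent about where the dimension-freeness comes from (conditional independence of the coordinates of $\tilde{\bv{r}}^m$), at the cost of invoking heavier machinery and keeping track of two regimes.
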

The dependence on $\log(n)$ in the analysis of \cite{BastonNakatsukasa:2022} arises through the use of a union bound argument: they show that  Hutchinson's estimator separately obtains an accurate estimate for each entry of $\bv{A}$'s diagonal, and thus $\|\bv{r}^m(\bv{A}) - \diag(\bv{A})\|_2$ can be bounded\footnote{It is possible to replace $n$ with a natural ``intrinsic dimension'' parameter that is smaller for some problem instances  \cite{Hallman:2022}.}. A similar $\log(n)$ appeared in early analysis for the trace estimation problem \cite{AvronToledo:2011} and was later removed \cite{Roosta-KhorasaniAscher:2015}. 
We obtain a comparable improvement through a refined analysis of the stochastic diagonal estimator that relies on a symmetrization argument and techniques for proving vector-valued Bernstein inequalities \cite{Yurinskii:1970}. 

We do note that it is possible to obtain a \emph{low probability} result for Hutchinson's estimator which almost matches the bound of \cref{thm:main}, but with a costly linear dependence on $1/\delta$. We discuss this result in Section \ref{sec:relation}. Additionally, using this low probability result, the same asymptotic complexity as \cref{thm:main} (with no $n$ dependence, and just a $\log(1/\delta)$ dependence) can be obtained by combining Hutchinson's estimator with a multi-dimensional variant of the ``median trick''. We discuss this approach in Section \ref{sec:general_high_prob}. However, we are mostly interested in analyzing Hutchinson's estimator itself as the method is 1) simpler to implement 2) essentially parameter free (only requires specifying $m$) and 3) the most widely used diagonal estimator in practice.

We also note that \cref{thm:main} is tight, and the bound cannot be further improved for Hutchinson's estimator. To see that this is the case, consider the matrix $\bv{A} = \begin{bmatrix}0&1\\0&0\end{bmatrix}$. We can check that $r^m(\bv{A}) = \begin{bmatrix}S/m\\0\end{bmatrix}$ where $S$ is a sum of $m$ independent $\pm1$ random variables. By the well-known tightness of the Chernoff bound (see e.g. \cite{KleinYoung:2015}) we will only have that $S/m \leq \epsilon$ with probability $1-\delta$ if $m = O(\log(1/\delta)/\epsilon^2)$, which matches the upper bound implied by  \cref{thm:main}. It is possible that a different estimator could improve on \cref{thm:main}, either in general or for some classes of matrices. Proving a strong lower bound showing the result is optimal in e.g. the matrix-vector product model of computation is a nice open question \cite{SunWoodruffYang:2019,MeyerMuscoMusco:2021}.

\section{Preliminaries}
\noindent\textbf{Notation.} For a vector $\bv{y}\in \R^n$, $\|\bv{y}\|_2 = (\sum_{i=1}^n y_i^2)^{1/2}$ denotes the Euclidean norm. For a matrix $\bv{A}\in \R^{n\times m}$, $\|\bv{A}\|_F = (\sum_{i=1}^n\sum_{j=1}^m A_{ij}^2 )^{1/2}$ denotes the Frobenius norm and $\|\bv{A}\|_2  = \max_{\bv{x}\in \R^m} \|\bv{A}\bv{x}\|_2/\|\bv{x}\|_2$ denotes the spectral norm. When $\bv{A}$ is square, $\tr(\bv{A}) = \sum_{i=1}^n A_{ii}$ denotes the trace. We use $c,c', C,$ etc. to denote absolute constants that are independent of the problem input and all other parameters. The exact value of these constants changes depending on context. 
\vspace{1em}

\noindent{\textbf{Random Variables}.}
When analyzing random variables, we will make use of the properties of sub-Gaussian and sub-exponential random variables, using the notation of \cite{Vershynin:2018}. Formally we define:
\begin{definition}[Sub-Gaussian Random Variable]
	\label{def:subgauss}
	A random variable $X$ is sub-Gaussian with parameter $K$ if we have $\E\left[e^{X^2/K^2 }\right] \leq 2$.
\end{definition}

\begin{definition}[Sub-exponential Random Variable]
		\label{def:subexp}
	A random variable $X$ is sub-exponential with parameter $K$ if we have $\E\left[e^{|X|/K}\right] \leq 2$.
\end{definition}\textit{}

\noindent{\textbf{Trace Estimation}.} To prove \cref{thm:main} we will relate Hutchinson's diagonal estimator (\cref{def:diag_est}) to the well-known Hutchinson's trace estimator, which we define below:
\begin{definition}[Hutchinson's Trace Estimator] 
	\label{def:trace_est}
	Let $\bv{g}\in \{-1,+1\}^n$ be a vectors with i.i.d. Rademacher entries. Hutchinson's trace estimator $T$ for a matrix $\bv{B}\in \R^{n\times n}$ is:
	\begin{align*}
		T(\bv{B}) =  \bv{g}^T \bv{B}\bv{g}.
	\end{align*}
\end{definition}
It is not hard to show that $\E[T(\bv{B})] = \tr(\bv{B})$ and $\Var[T(\bv{B})] = 2\|\bv{\bar{B}}\|_F^2$,  where $\bv{\bar{B}}$ denotes $\bv{B}$ with its diagonal entries set to $0$. By averaging repeated copies of the estimator we can obtain a lower variance estimate. To prove high probability error bounds, a tight analysis can be obtained via the Hanson-Wright inequality, which implies that  $T(\bv{B})$ exhibits exponential concentration \cite{CortinovisKressner:2022}. We will use an intermediate result stated in Section 6.2 of \cite{Vershynin:2018} as a step towards proving Hanson-Wright\footnote{Note that when $\bv{g}$ contains Rademacher random variables, $Z(\bv{B}) = \tr(\bv{B})-T(\bv{B})$ exactly equals $\sum_{i\neq j} g_ig_jB_{ij}$, which is precisely the ``off-diagonal sum'' random variable bounded in \cite{Vershynin:2018}.}:
\begin{lemma}[\cite{Vershynin:2018}]
	\label{lem:hutch_moment_bound}
	Let $Z(\bv{B}) = T(\bv{B})-\tr(\bv{B})$ be the error of Hutchinson's trace estimator as in \cref{def:trace_est}. For absolute constants $c,C$, we have:
\begin{align*}
	\E\left[e^{\lambda Z(\bv{B})}\right] &\leq e^{C\lambda^2 \|\bv{B}\|_F^2} & &\text{for all} & |\lambda| \leq c/\|\bv{B}\|_2.
\end{align*}
\end{lemma}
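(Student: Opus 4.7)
The proof plan is to establish this moment generating function bound for the decoupled Rademacher chaos $Z(\bv{B}) = \sum_{i\neq j} g_i g_j B_{ij}$ by combining three standard ingredients: (i) decoupling, (ii) conditional sub-Gaussian tail bounds for linear Rademacher sums, and (iii) a Gaussian linearization trick that reduces the remaining Rademacher MGF to a Gaussian quadratic form with a closed-form MGF.

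First, using the footnote identity $Z(\bv{B}) = \sum_{i\neq j} g_i g_j B_{ij}$, I would apply the classical de la Pe\~na--Gin\'e decoupling inequality: for any convex function $f$,
\begin{align*}
\E\!\left[f\!\left(\textstyle\sum_{i\neq j} g_i g_j B_{ij}\right)\right] \;\le\; \E\!\left[f\!\left(4\,\bv{g}^T \bv{B}\, \bv{g}'\right)\right],
\end{align*}
where $\bv{g}'$ is an independent Rademacher copy of $\bv{g}$. Taking $f(x)=e^{\lambda x}$, it suffices to bound $\E[\exp(\mu\, \bv{g}^T \bv{B} \bv{g}')]$ for $\mu=4\lambda$. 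Conditioning on $\bv{g}'$, the quantity $\bv{g}^T \bv{B} \bv{g}' = \sum_i g_i (\bv{B}\bv{g}')_i$ is a weighted Rademacher sum, so Hoeffding's lemma immediately gives
\begin{align*}
\E\!\left[e^{\mu\, \bv{g}^T \bv{B} \bv{g}'} \,\big|\, \bv{g}'\right] \;\le\; \exp\!\left(\tfrac{\mu^2}{2}\|\bv{B}\bv{g}'\|_2^2\right).
\end{align*}

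The key remaining step is to bound $\E_{\bv{g}'}[\exp(\tfrac{\mu^2}{2}\|\bv{B}\bv{g}'\|_2^2)]$. Here I would use the Gaussian linearization identity $e^{s\|\bv{v}\|_2^2/2} = \E_{\bv{h}}[e^{\sqrt{s}\,\bv{h}^T \bv{v}}]$ for $\bv{h}\sim N(0,I_n)$, applied with $\bv{v}=\bv{B}\bv{g}'$ and $s=\mu^2$. Swapping expectations and reapplying Hoeffding to the inner sub-Gaussian vector $\bv{g}'$ yields
\begin{align*}
\E_{\bv{g}'}\!\left[e^{\mu^2\|\bv{B}\bv{g}'\|_2^2/2}\right] \;\le\; \E_{\bv{h}}\!\left[e^{\mu^2 \bv{h}^T \bv{B}\bv{B}^T \bv{h}/2}\right] \;=\; \prod_i (1-\mu^2 \sigma_i^2)^{-1/2},
\end{align*}
where the last equality is the standard Gaussian quadratic-form MGF and $\sigma_i$ are singular values of $\bv{B}$; this is valid provided $\mu^2 \|\bv{B}\|_2^2 < 1$. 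Restricting to $|\mu| \le 1/(\sqrt{2}\|\bv{B}\|_2)$ so that $\mu^2 \sigma_i^2 \le 1/2$, the elementary inequality $(1-x)^{-1/2}\le e^{x}$ on $[0,1/2]$ yields $\prod_i (1-\mu^2 \sigma_i^2)^{-1/2} \le \exp(\mu^2 \sum_i \sigma_i^2) = \exp(\mu^2 \|\bv{B}\|_F^2)$. Back-substituting $\mu = 4\lambda$ gives the desired bound $\E[e^{\lambda Z(\bv{B})}] \le \exp(16\lambda^2 \|\bv{B}\|_F^2)$ for $|\lambda| \le 1/(4\sqrt{2}\,\|\bv{B}\|_2)$, so the constants $C=16$ and $c=1/(4\sqrt{2})$ suffice.

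The main obstacle is the passage through a \emph{quadratic} MGF $\E[\exp(s\|\bv{B}\bv{g}'\|_2^2)]$ after conditioning: one cannot simply invoke sub-Gaussianity of $\bv{g}'$ directly since $\|\bv{B}\bv{g}'\|_2^2$ is sub-exponential, not sub-Gaussian. The Gaussian linearization trick neatly sidesteps this, replacing the troublesome Rademacher quadratic form with a Gaussian quadratic form whose MGF has a clean determinantal expression. A secondary technical point is tracking the spectral-norm radius of convergence throughout the decoupling and linearization, which contracts the valid range of $\lambda$ by constant factors; since the lemma only asserts existence of absolute constants, this poses no real difficulty.
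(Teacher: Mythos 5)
Your argument is correct and is essentially the same proof as in the source the lemma is cited from (Vershynin, Section~6.2), namely decoupling, a conditional Hoeffding/sub-Gaussian MGF bound, Gaussian linearization of the resulting quadratic MGF, and the closed-form Gaussian chaos MGF; the paper itself simply cites this result rather than reproving it. The only nit is that the decoupling inequality is stated for the diagonal-free matrix, so the subsequent norms should formally be those of $\bar{\bv{B}}$ (with $\|\bar{\bv{B}}\|_F \leq \|\bv{B}\|_F$ and $\|\bar{\bv{B}}\|_2 \leq 2\|\bv{B}\|_2$), which only affects the absolute constants.
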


\subsection{Relation Between Diagonal Estimator and  Trace Estimator}
\label{sec:relation}
Consider $\bv{r}^m(\bv{A})$ and as before let $\diag(\bv{A})$ denote the diagonal of $\bv{A}$. Let $\bv{g}^{1}, \ldots, \bv{g}^{m}$ be the $m$ random $\pm 1$ vectors used to obtain $\bv{r}^m(\bv{A})$. We can rewrite the mean zero random vector $\bv{r}^m(\bv{A}) - \diag(\bv{A})$ as:
\begin{align*}
	\bv{r}^m(\bv{A}) - \diag(\bv{A}) &= \frac{1}{m}\sum_{z=1}^m \bv{e}_z & &\text{where for $z=1, \ldots, m$} & \bv{e}_z = \bv{g}^{z}\odot \bv{A}\bv{g}^{z} - \diag(\bv{A}). 
\end{align*}
Note that the $i^\text{th}$ entry in $\bv{e}_z$ equals $\sum_{j\neq i}A_{ij}g_i^{z}g_j^{z}$. 
Using that $(g_i^z)^2 = 1$ for all $i,z$ and recalling that $\bar{\bv{A}}$ denotes $\bv{A}$ with its diagonal set to zero, we have:
\begin{align*}
	\|\bv{e}_z\|_2^2 = \sum_{i=1}^d \left(\sum_{j\neq i}A_{ij}g_i^{z}g_j^{z}\right)^2 =\sum_{i=1}^d\sum_{j\neq i}\sum_{k \neq i} A_{ij}A_{ik} g_i^z g_i^zg_j^zg_k^z =\sum_{i=1}^d \sum_{j=1}^d \sum_{k=1}^d \bar{{A}}_{ij}\bar{{A}}_{ik} g_j^zg_k^z 
		&= \sum_{j=1}^d \sum_{k=1}^d g_j^zg_k^z \sum_{i=1}^d\bar{{A}}_{ij}\bar{{A}}_{ik}.
\end{align*}
Let $\bv{B} = \bar{\bv{A}}^T\bar{\bv{A}}$. We have that $B_{jk} = \sum_{i=1}^d\bar{{A}}_{ij}\bar{{A}}_{ik}$, so we can rewrite the above as:
\begin{align}
	\label{eq:diag_to_trace}
	\|\bv{e}_z\|_2^2 = \sum_{j=1}^d \sum_{k=1}^d g_j^zg_k^z {B}_{jk} =  {\bv{g}^z}^T \bv{B} \bv{g}^z.
\end{align}
In other words, $\|\bv{e}_z\|_2^2$ is identically distributed to Hutchinson's \emph{trace estimator} applied to the positive semi-definite matrix $\bv{B}$. An immediate consequence of \cref{eq:diag_to_trace} is that $\E\|\bv{e}_z\|_2^2 = \tr(\bv{B}) = \|\bar{\bv{A}}\|_F^2$. This in turn yields the following:
\begin{lemma}[Expected Squared Error of Hutchinson's Diagonal Estimator]
	\label{lem:expectation_lemma}
	Let $\bv{r}^m(\bv{A})$  as in \cref{def:diag_est}. We have:
	\begin{align*}
	\E\left[ \|\bv{r}^m(\bv{A}) - \diag(\bv{A}) \|_2^2\right] = \frac{1}{m}\|\bar{\bv{A}}\|_F^2.
	\end{align*}
\end{lemma}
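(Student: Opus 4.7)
The plan is to combine two ingredients that are already essentially laid out in the preceding discussion: the identity from \cref{eq:diag_to_trace} giving $\E\|\bv{e}_z\|_2^2 = \|\bar{\bv{A}}\|_F^2$, and the independence and mean-zero property of the individual error vectors $\bv{e}_1, \ldots, \bv{e}_m$.

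First I would expand the squared norm of the average:
\begin{align*}
\E\left[\|\bv{r}^m(\bv{A}) - \diag(\bv{A})\|_2^2\right]
= \E\left[\left\|\frac{1}{m}\sum_{z=1}^m \bv{e}_z\right\|_2^2\right]
= \frac{1}{m^2}\sum_{z=1}^m \sum_{z'=1}^m \E\left[\langle \bv{e}_z, \bv{e}_{z'}\rangle\right].
\end{align*}
For the cross terms with $z \neq z'$, the vectors $\bv{e}_z$ and $\bv{e}_{z'}$ depend on disjoint sets of independent Rademacher vectors, so they are independent. Combined with the fact that $\E[\bv{e}_z] = \E[\bv{g}^z \odot \bv{A}\bv{g}^z] - \diag(\bv{A}) = \diag(\bv{A}) - \diag(\bv{A}) = \bv{0}$ (which follows from $\E[(g_i^z)^2] = 1$ and $\E[g_i^z g_j^z] = 0$ for $i \neq j$), the cross terms vanish.

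This leaves only the diagonal terms, and for each of them the identity $\E\|\bv{e}_z\|_2^2 = \E[{\bv{g}^z}^T \bv{B}\bv{g}^z] = \tr(\bv{B}) = \|\bar{\bv{A}}\|_F^2$ from \cref{eq:diag_to_trace} applies. Thus
\begin{align*}
\E\left[\|\bv{r}^m(\bv{A}) - \diag(\bv{A})\|_2^2\right]
= \frac{1}{m^2} \sum_{z=1}^m \E\|\bv{e}_z\|_2^2
= \frac{1}{m^2} \cdot m \cdot \|\bar{\bv{A}}\|_F^2
= \frac{1}{m}\|\bar{\bv{A}}\|_F^2.
\end{align*}

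There is no real obstacle here; the result is an immediate consequence of the work already done to derive \cref{eq:diag_to_trace}, and the only care needed is in verifying that $\bv{e}_z$ has mean zero so that independence across trials kills the cross terms. The whole argument is a few lines.
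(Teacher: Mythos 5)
Your proof is correct and follows the same route as the paper: expand the squared norm of the average, kill the cross terms using independence and $\E[\bv{e}_z]=\bv{0}$, and apply the already-derived identity $\E\|\bv{e}_z\|_2^2 = \tr(\bv{B}) = \|\bar{\bv{A}}\|_F^2$. The only difference is that you spell out the mean-zero verification, which the paper asserts without detail.
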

\begin{proof}
	\begin{align*}
	\E \left[\|\bv{r}^m(\bv{A}) - \diag(\bv{A}) \|_2^2\right] = \E \left[\left\|\frac{1}{m}\sum_{z=1}^m \bv{e}_z \right\|_2^2\right] =  \frac{1}{m^2}\E\left[\sum_{z=1}^m \|\bv{e}_z\|_2^2 +\sum_{z=1}^m\sum_{w\neq z} {\bv{e}_z}^T\bv{e}_w \right] = \frac{1}{m^2}\left[\sum_{z=1}^m\|\bar{\bv{A}}\|_F^2 +0 \right] 
	\end{align*}
In the last inequality we used that ${\bv{e}_z}^T\bv{e}_w = 0$ because the random vectors are mean zero and independent.
\end{proof}
Applying Markov's inequality, an immediate consequence of \cref{lem:expectation_lemma} is that $\|\bv{r}^m(\bv{A}) - \diag(\bv{A}) \|_2^2 \leq \frac{1}{m\delta}\|\bar{\bv{A}}\|_F^2$ with probability $1-\delta$. 
Setting $m = \frac{1}{\epsilon^2\delta}$ we thus have that with probability  $1-\delta$, $\|\bv{r}^m(\bv{A}) - \diag(\bv{A}) \|_2\leq \epsilon\|\bar{\bv{A}}\|_F$.
Notably this simple bound already avoids the $\log(n)$ dependence from \cite{BastonNakatsukasa:2022}, but it incurs a suboptimal $1/\delta$ dependence in comparison to that result and \cref{thm:main}, which depend on $\log(1/\delta)$. 

\section{Proof of Main Theorem}
\label{sec:main_proof}
In this section we prove \cref{thm:main}, which requires bounding the norm of $\bv{r}^m(\bv{A}) - \diag(\bv{A})$. This random vector can be written as the average of $m$ mean-zero random vectors $\bv{r}^m(\bv{A}) - \diag(\bv{A}) = \frac{1}{m}\sum_{z=1}^m \bv{e}_z$. Via the connection to Hutchinson's trace estimator, we know the expected norm of each $\bv{e}_z$ is equal to $\|\bar{\bv{A}}\|_F$. Moreover, each norm should not be much larger than $\|\bar{\bv{A}}\|_F$ with high probability due to the concentration of Hutchinson's trace estimator. 
Thus a natural approach might be to apply a ``vector valued Bernstein'' inequality for sums of norm-bounded random vectors \cite{Yurinskii:1970}. However a direct application of prior work yields a suboptimal polynomial dependence on $\log(1/\delta)$. 

Alternatively, since $\|\bv{r}^m(\bv{A}) - \diag(\bv{A})\|_2^2$ is a low-degree (degree $4$) polynomial in Rademacher random variables, we might hope to prove concentration by applying techniques based on hypercontractivity to bound the random variable's higher moments, as done e.g. in \cite{Skorski:2021} for Hutchinson's trace estimator. However, doing so would require establishing a bound on the second moment $\E[\|\bv{r}^m(\bv{A}) - \diag(\bv{A})\|_2^4]$, which is already challenging. Relying directly on hypercontractivity also seems to be limited to yielding a suboptimal dependence on $\log(1/\delta)$.

We take another approach, providing an analysis that loosely follows the same approach used for Lemma 2 in Yurinskii's proof of the vector-valued Bernstein inequality and does not require $\bv{e}_1, \ldots, \bv{e}_m$ to be strictly bounded.  


\subsection{Symmetrization and Scalar Comparison}
Let $\bv{e} = \sum_{z=1}^m \bv{e}_z$ and note that  $\bv{e} = m\cdot (\bv{r}^m(\bv{A}) - \diag(\bv{A}) )$. Our goal will be to upper bound the moments of $\bv{e}$'s squared norm by the moments of an easier to analyze scalar random variable. 
To do so, we start with a symmetrization argument. First, consider the alternative random vector $\tilde{\bv{e}} = \bv{e}_1 - \bv{e}_2$ where $\bv{e}_1$ and $\bv{e}_2$ are i.i.d. copies of $\bv{e}$. 
Using that $f(\bv{x}) = \|\bv{x}\|_2^{2k}$ is a convex function and that $\E[\bv{e}_1] = \E[\bv{e}_2] = 0$, we can apply Jensen's inequality to show that:
\begin{align}
	\label{eq:moment_bound_1}
	\mathbb{E}\left[\|\bv{e}\|_2^{2k}\right] \leq \mathbb{E}\left[\|\bv{e}_1 - \bv{e}_2\|_2^{2k}\right] = \mathbb{E}\left[\|\tilde{\bv{e}}\|_2^{2k}\right]. 
\end{align}
See Lemma 6.1.2 in \cite{Vershynin:2018} for a detailed derivation of the above inequality. Next, we can turn our attention to bounding $\mathbb{E}\left[\|\tilde{\bv{e}}\|_2^{2k}\right]$. Letting $\bv{e}_{z,1}$ and $\bv{e}_{z,2}$ be i.i.d. copies of $\bv{e}_z$ (i.e. error vectors of Hutchinson's diagonal estimator applied with a single random vector), we have that $\tilde{\bv{e}} = \sum_{z=1}^m \bv{e}_{z,1} - \bv{e}_{z,2}$.
Let $\bv{w}_z$ denote $\bv{w}_z = \bv{e}_{z,1} - \bv{e}_{z,2}$ and note that $\tilde{\bv{e}} = \sum_{z=1}^m \bv{w}_z$. Let $W_z$ be a scalar random variable equal to $r_z\cdot\|\bv{w}_z\|_2$, where $r_z$ is a $\pm 1$ Rademacher random variable. For all $k = 1, 2, \ldots$ we have that:
\begin{align*}
	\mathbb{E}[W_z] &= 0 & &\text{and} & \mathbb{E}[W_z^{2k}] = \mathbb{E}[\|\bv{w}_z\|_2^{2k}].
\end{align*}
Let $\tilde{E} = \sum_{z=1}^m W_z$. We will bound the moments of $\|\tilde{\bv{e}}\|_2^2$ by comparing to $\tilde{E}$. In particular, we will show that for all $k$,
\begin{align}
		\label{eq:scalar_comparison}
\mathbb{E}\left[\|\tilde{\bv{e}}\|_2^{2k}\right] \leq \mathbb{E}\left[\tilde{E}^{2k}\right]. 
\end{align}
To do so, we compare the expansions:
\begin{align*}
	\mathbb{E}\left[\|\tilde{\bv{e}}\|_2^{2k}\right] &= \mathbb{E}\left[\left(\|\bv{w}_1\|_2^2 + \ldots + \|\bv{w}_m\|_2^2 + 2\bv{w}_1^T\bv{w}_2 + \ldots + 2\bv{w}_{m-1}^T\bv{w}_m\right)^k\right]\\
	\mathbb{E}\left[\tilde{E}^{2k}\right] &= \mathbb{E}\left[\left(W_1^2 + \ldots + W_m^2 + 2W_1W_2 + \ldots + 2W_{m-1}W_m\right)^k\right]
\end{align*}
Consider each term obtained when expanding out the $k^\text{th}$ powers above and apply linearity of expectation. Because each $\bv{w}_z$ is a symmetric random variable -- i.e. $\Pr(\bv{w}_z = X) = \Pr(\bv{w}_z = -X)$ -- we can verify that the expectation of any term where $\bv{w}_z^T\bv{w}_j$ appears an odd number of times (for \emph{any} fixed $j$) is equal to zero. Similarly, the corresponding term in the second sum has expectation zero because some $W_z$ must appear an odd number of times. For all other terms, we can use  that $\bv{w}_z^T\bv{w}_j \leq \|\bv{w}_z\|_2 \|\bv{w}_j\|_2$ (Cauchy–Schwarz) and that $\mathbb{E}[\|\bv{w}_z\|_2^{2k}] = \mathbb{E}[W_z^{2k}]$ to see that each term in the bottom expansion  upper bounds the corresponding term in the top. We conclude \cref{eq:scalar_comparison}. 

A Taylor expansion of $e^x$ combined with \cref{eq:moment_bound_1} and \cref{eq:scalar_comparison} implies a bound on the moment generating function (MGF) of $\|\bv{e}\|_2^2$, which we will use to obtain a final concentration result. Specifically, we have that for any $\lambda \geq 0$:
\begin{align}
	\label{eq:exp_moment_chain}
	\mathbb{E}\left[e^{\lambda\|\bv{e}\|_2^2} \right]\leq \mathbb{E}\left[e^{\lambda\|\tilde{\bv{e}}\|_2^2} \right]\leq \mathbb{E}\left[e^{\lambda\tilde{E}^2} \right].
\end{align}

%

\subsection{Moment Bound}
With \cref{eq:exp_moment_chain} in place, we prove our main result by bounding the exponential $\mathbb{E}\left[e^{\lambda\tilde{E}^2} \right]$ for our scalar random variable $\tilde{E}$. Specifically, we will show that $\tilde{E}$ is a sub-exponential random variable (\cref{def:subexp}), and thus $\|\bv{e}\|_2^2$ is as well by \cref{eq:exp_moment_chain}. We can then apply a standard tail bound for sub-exponential random variables.
\begin{proof}[Proof of \cref{thm:main}]
Recall that $\tilde{E} = \sum_{z=1}^m W_z$ is the sum of i.i.d. random variables. Recall that $W_z = r_z \cdot \|\bv{w}_z\|_2$, where $r_z$ is a random $\pm 1$ and $\bv{w}_z = \bv{e}_{z,1} - \bv{e}_{z,2}$. We have that  $\|\bv{w}_z\|_2^2 \leq  2\|\bv{e}_{z,1}\|_2^2 + 2\|\bv{e}_{z,2}\|_2^2$ and since $\bv{e}_{z,1}$ and $\bv{e}_{z,2}$ are just i.i.d. copies of $\bv{e}_z$, we have that for all $\lambda \geq 0$:
\begin{align}
	\label{eq:symmetrization}
	\mathbb{E}\left[e^{\lambda \|\bv{w}_z\|_2^2} \right] \leq 	\mathbb{E}\left[e^{4 \lambda \|\bv{e}_z\|_2^2} \right]. 
\end{align}
As discussed before, $\|\bv{e}_z\|_2^2$ is exactly equal to Hutchinson's estimator applied to the matrix $\bv{B} = \bar{\bv{A}}^T\bar{\bv{A}}$. Under the notation of \cref{lem:hutch_moment_bound}, $\|\bv{e}_z\|_2^2 = T(\bv{B}) = Z(\bv{B}) + \tr(\bv{B})$. We can thus apply \cref{lem:hutch_moment_bound} to obtain that for $0 \leq 4\lambda \leq {c}/{\|\bv{B}\|_2}$,
\begin{align*}
	\mathbb{E}\left[e^{4 \lambda \|\bv{e}_z\|_2^2} \right] =  e^{4\lambda \tr(\bv{B})}\mathbb{E}\left[e^{4 \lambda Z(\bv{B})} \right] \leq  e^{4\lambda \tr(\bv{B})}e^{16C \lambda^2\|\bv{B}\|_F^2}.
\end{align*}
Since $B = \bar{\bv{A}}^T\bar{\bv{A}}$ is a positive semidefinite matrix, $\|\bv{B}\|_F^2/\|\bv{B}\|_2 \leq \tr(\bv{B})$ and thus $4\lambda\|\bv{B}\|_F^2 \leq c\tr(\bv{B})$. Continuing we have:
\begin{align}
	\label{eq:subexp}
	\mathbb{E}\left[e^{4 \lambda \|\bv{e}_z\|_2^2} \right] \leq e^{4\lambda \tr(\bv{B}) + 4Cc\lambda\tr(\bv{B})}
\end{align}
We conclude from \cref{eq:symmetrization} and \cref{eq:subexp} that for all $\lambda \leq \frac{c}{4\|\bv{B}\|_2} \leq \frac{c}{4\tr(\bv{B})}$, 
	\begin{align}
		\label{eq:final_moment}
	\mathbb{E}\left[e^{\lambda \|\bv{w}_z\|_2^2} \right] \leq e^{c'\lambda \tr(\bv{B})},
	\end{align}
where $c' = 4 + 4Cc$ is a constant. 
\subsection{Completing the Proof }
Applying \cref{def:subexp}, we can check that \Cref{eq:final_moment} implies that $\|\bv{w}_z\|_2^2$ is a sub-exponential random variable with parameter $C'\cdot \tr(\bv{B})$, where $C' = \max(2c', 4/c)$. Equivalently, $\|\bv{w}_z\|_2$ is sub-Gaussian with parameter $K = \sqrt{C'}\sqrt{\tr(\bv{B})}$. 
Since it has the same moments as $\|\bv{w}_z\|_2$, $W_z = r_z \cdot \|\bv{w}_z\|_2$ is also sub-Gaussian with the same parameter. 

Proposition 2.6.1 from \cite{Vershynin:2018} states that the sum of $m$ mean $0$, independent, sub-Gaussian random variables, each with parameter $K$, is itself sub-Gaussian with parameter $C\cdot \sqrt{m} K$ for a fixed constant $C$.
We conclude that $\tilde{E} = \sum_{i=1}^m W_z$ is sub-Gaussian with parameter $C \sqrt{C'}\cdot \sqrt{m}\sqrt{\tr(\bv{B})}$. Finally, it follows that $\tilde{E}^2$ is sub-exponential with parameter $c''\cdot m\tr(\bv{B})$, where $c'' = C^2 C'$. From \cref{eq:exp_moment_chain}, we know that $\|\bv{e}\|_2^2$ is sub-exponential with the same parameter.
Finally, from Proposition 2.7.1 in \cite{Vershynin:2018} we have that $\Pr\left[\|\bv{e}\|_2^2 \geq t \right] \leq 2 e^{-\frac{t}{c''\cdot m \tr(\bv{B})}}$ and thus:
\begin{align*}
	\Pr\left[\frac{1}{m^2}\|\bv{e}\|_2^2 \geq \tr(\bv{B})\cdot \frac{c''\log(2/\delta)}{m} \right] \leq \delta.
\end{align*}
Recalling that $[\|\bv{r}^m(\bv{A}) - \diag(\bv{A})\|_2^2 = \frac{1}{m^2}\|\bv{e}\|_2^2$ and  $\tr(\bv{B}) = \|\bar{\bv{A}}\|_F^2$,  \cref{thm:main} follows.
\end{proof}

\section{General Stochastic Diagonal Estimators}
\label{sec:general_estimators}
In addition to the standard Hutchinson's estimator, prior work on stochastic diagonal and trace estimation also considers estimators involving Gaussian random vectors, or more generally, vectors filled with arbitrary mean $0$, variance $1$ random variables  \cite{Girard:1987,BastonNakatsukasa:2022}. 

\begin{definition}[Generalized Diagonal Estimator\footnote{We note that our generalized diagonal estimator differs slightly from some prior work, which considers the estimator $\left[\sum_{z=1}^m \bv{g}^{z}\odot \bv{A}\bv{g}^{z} \right] \oslash \left[  \sum_{z=1}^m \bv{g}^{z}\odot \bv{g}^{z} \right]$ \cite{BastonNakatsukasa:2022,BekasKokiopoulouSaad:2007}. The estimators only differ in the term used for scaling and for typical choices of random variables (e.g, Gaussains) should be very comparable, since every entry in $\sum_{z=1}^m \bv{g}^{z}\odot \bv{g}^{z}$ will concentrate closely around $m$. }] 
	\label{def:gen_diag_est}
	Let $\bv{g}^{1},\ldots, \bv{g}^{m} \in \R^n$ be independent random vectors, each with i.i.d. entries that have mean $0$ and variance $1$. The generalized stochastic diagonal estimator $\bv{d}^m(\bv{A})$ has the form:
	\begin{align*}
		\bv{d}^m(\bv{A}) = \frac{1}{m}\sum_{z=1}^m \bv{g}^{z}\odot \bv{A}\bv{g}^{z}.
	\end{align*}
\end{definition} 
When each $g^z_i$ has bounded 4th moment, we can prove a statement comparable to \cref{lem:expectation_lemma}.
\begin{lemma}[Expected Squared Error of Generalized Diagonal Estimator]
	\label{lem:general_expectation_lemma}
	Let $\bv{d}^m(\bv{A})$ be as in \cref{def:diag_est} and suppose each $g^z_i$ has 4th moment bounded by some constant $c_4$. I.e. $\E[(g^z_i)^4] \leq c_4$. Then we have:
	\begin{align*}
		\E\left[ \|\bv{d}^m(\bv{A}) - \diag(\bv{A}) \|_2^2\right] = \frac{1}{m}\left(\|\bar{\bv{A}}\|_F^2  + (1+ c_4 - 2)\sum_{i=1}^d A_{ii}^2 \right)
	\end{align*}
\end{lemma}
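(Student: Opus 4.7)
The plan is to mimic the proof of \cref{lem:expectation_lemma} but track the extra contribution from the diagonal of $\bv{A}$ that arises when $g_i^2$ is not deterministically $1$. As in the Rademacher case, let $\bv{e}_z = \bv{g}^{z}\odot \bv{A}\bv{g}^{z} - \diag(\bv{A})$, so that $\bv{d}^m(\bv{A}) - \diag(\bv{A}) = \frac{1}{m}\sum_{z=1}^m \bv{e}_z$. Each $\bv{e}_z$ is mean zero (because $\E[g_i^z g_j^z] = \mathbbm{1}[i=j]$) and the $\bv{e}_z$ are i.i.d., so cross terms in $\E\|\sum_z \bv{e}_z\|_2^2$ vanish and the problem reduces to computing $\E\|\bv{e}_1\|_2^2 = \sum_{i=1}^n \Var\!\big((\bv{g}^1 \odot \bv{A}\bv{g}^1)_i\big)$.

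Next, for each $i$ I would split the $i$th component $\sum_{j} A_{ij} g_i g_j = A_{ii} g_i^2 + \sum_{j \neq i} A_{ij} g_i g_j$ into a ``diagonal part'' and an ``off-diagonal part'' and expand the variance of the sum. Three elementary moment identities drive the computation: $\Var(g_i^2) = \E[g_i^4] - 1 = c_4 - 1$, which handles the diagonal contribution; $\E[g_i^3 g_j] = \E[g_i^3]\E[g_j] = 0$ for $j \neq i$, which kills the covariance between $A_{ii} g_i^2$ and every off-diagonal term $A_{ij} g_i g_j$; and $\E[g_i^2 g_j g_k] = \E[g_i^2]\E[g_j]\E[g_k] = 0$ whenever $j \neq k$ and both differ from $i$, so the off-diagonal part's variance collapses to $\sum_{j \neq i} A_{ij}^2 \, \E[g_i^2 g_j^2] = \sum_{j \neq i} A_{ij}^2 = \sum_{j} \bar{A}_{ij}^2$. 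Adding the two contributions gives $\Var((\bv{g}^1 \odot \bv{A}\bv{g}^1)_i) = (c_4 - 1) A_{ii}^2 + \sum_j \bar{A}_{ij}^2$.

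Summing over $i$ produces $\E\|\bv{e}_1\|_2^2 = \|\bar{\bv{A}}\|_F^2 + (c_4 - 1)\sum_{i=1}^n A_{ii}^2$, and dividing by $m$ yields the claimed identity (with $1 + c_4 - 2 = c_4 - 1$). There is no genuine obstacle here beyond careful bookkeeping: the only place where one has to be slightly attentive is in verifying that the cross-covariance between $A_{ii} g_i^2$ and the off-diagonal sum vanishes without assuming that $g$ is symmetric — this works solely because $\E[g_j] = 0$, so the third-moment of $g_i$ never has to be controlled. If the hypothesis is read as a strict bound $\E[(g_i^z)^4] \leq c_4$ rather than equality, the same computation delivers the statement with ``$=$'' replaced by ``$\leq$'' in the obvious way.
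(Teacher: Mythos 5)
Your proof is correct and follows essentially the same route as the paper: decompose the $i$th coordinate into the diagonal piece $A_{ii}g_i^2$ (the paper's $-h_i$, up to sign) and the off‑diagonal piece $\sum_{j\neq i}A_{ij}g_ig_j$, observe that cross‑covariances and the $j\neq k$ off‑diagonal cross terms vanish because independent mean‑zero factors can be pulled out, and sum the surviving $\Var(g_i^2)=c_4-1$ and $\E[g_i^2g_j^2]=1$ contributions. Your parenthetical remark that the lemma's ``$\leq c_4$'' hypothesis strictly delivers an inequality (the equality form implicitly takes $\E[(g_i^z)^4]=c_4$) is a fair reading of a small imprecision in the stated lemma that the paper's proof also glosses over.
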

\begin{proof}
	As before, fix $z$ and  let $\bv{e}_z = \bv{g}^{z}\odot \bv{A}\bv{g}^{z} - \diag(\bv{A})$.
	Let $h_i = A_{ii} - A_{ii}(g_i^{z})^2$ and note that, since $\E(g_i^{z})^2 = 1$, we have that $\E h_i = 0$. More over, we have:
	\begin{align*}
		\E \left[h_i^2\right] = \E\left[ A_{ii}^2 + A_{ii}^2(g_i^{z})^4 - 2A_{ii}^2(g_i^{z})^2\right] =(c_4 + 1 - 2)A_{ii}^2.
	\end{align*}
	We then have that:
	\begin{align*}
		\|\bv{e}_z\|_2^2 = \sum_{i=1}^d \left(h_i + \sum_{j\neq i}A_{ij}g_i^{z}g_j^{z}\right)^2 &= \sum_{i=1}^d h_i^2 +  \sum_{i=1}^dh_i\sum_{j\neq i}A_{ij}g_i^{z}g_j^{z} +  \sum_{i=1}^d \left(\sum_{j\neq i}A_{ij}g_i^{z}g_j^{z}\right)^2 \\
		&= \sum_{i=1}^d h_i^2 +  \sum_{i=1}^dh_i\sum_{j\neq i}A_{ij}g_i^{z}g_j^{z} + \sum_{i=1}^d\sum_{j\neq i}\sum_{k \neq i} A_{ij}A_{ik} g_i^z g_i^zg_j^zg_k^z
	\end{align*}
	Considering each term separately, we can bound the expectation of $\|\bv{e}_z\|_2^2$. Noting that $\E [g_i^z g_i^zg_j^zg_k^z] = 0$ if $j\neq k$ and $1$ otherwise since $j\neq i$, we have:
	\begin{align*}
		\E \|\bv{e}_z\|_2^2  = \sum_{i=1}^d (1+ c_4  - 2) A_{ii}^2 + 0 +  \sum_{i=1}^d\sum_{j\neq i} A_{ij}^2 = (1+ c_4 - 2)  \|\diag(\bv{A})\|_2^2 + \|\bar{\bv{A}}\|_F^2.
	\end{align*}
\end{proof}
Combining \cref{lem:general_expectation_lemma} with Markov's inequality yields a simple dimension independent bound:
\begin{corollary}
	\label{cor:gen_simple}
	 Let $\bv{d}^m$ be implemented with any mean $0$ variance $1$ random variable with 4th moment upper bounded by $c_4$ and let $E = \sqrt{(1+ c_4 - 2)  \|\diag(\bv{A})\|_2^2 + \|\bar{\bv{A}}\|_F^2}$. For any $\delta \in (0,1)$ and $m\geq 1$, with probability $1-\delta$:
	\begin{align*}
		\|\bv{d}^m(\bv{A}) - \diag(\bv{A})\|_2 \leq \sqrt{\frac{1}{m\delta}}\cdot E.
	\end{align*}
\end{corollary}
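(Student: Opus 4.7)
The plan is to mirror the simple Markov-based argument given just after \cref{lem:expectation_lemma} for the Rademacher case, now using \cref{lem:general_expectation_lemma} as the variance bound. First, I would observe that \cref{lem:general_expectation_lemma} gives exactly
\begin{align*}
\E\left[\|\bv{d}^m(\bv{A}) - \diag(\bv{A})\|_2^2\right] = \frac{1}{m}\left(\|\bar{\bv{A}}\|_F^2 + (c_4 - 1)\|\diag(\bv{A})\|_2^2\right) = \frac{E^2}{m},
\end{align*}
so the squared error has expectation $E^2/m$ by the definition of $E$ in the corollary statement.

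Next, since $\|\bv{d}^m(\bv{A}) - \diag(\bv{A})\|_2^2$ is a non-negative random variable, I would apply Markov's inequality directly:
\begin{align*}
\Pr\left[\|\bv{d}^m(\bv{A}) - \diag(\bv{A})\|_2^2 \geq \frac{E^2}{m\delta}\right] \leq \frac{\E[\|\bv{d}^m(\bv{A}) - \diag(\bv{A})\|_2^2]}{E^2/(m\delta)} = \delta.
\end{align*}
Taking square roots of the complementary event then yields the claimed bound $\|\bv{d}^m(\bv{A}) - \diag(\bv{A})\|_2 \leq \sqrt{1/(m\delta)}\cdot E$ with probability at least $1-\delta$.

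There is essentially no obstacle here: the corollary is an immediate consequence of \cref{lem:general_expectation_lemma} combined with Markov's inequality, paralleling the comment after \cref{lem:expectation_lemma} that handles the Rademacher case. The only bookkeeping step is verifying that the expression inside the square root in the definition of $E$ matches the right-hand side of \cref{lem:general_expectation_lemma} after multiplying by $m$, which follows by inspection since $1 + c_4 - 2 = c_4 - 1$.
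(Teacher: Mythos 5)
Your proof is correct and matches the paper's own (implicit) argument exactly: the paper states the corollary follows by combining \cref{lem:general_expectation_lemma} with Markov's inequality, precisely the route you take, mirroring the remark after \cref{lem:expectation_lemma}. No gaps.
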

\noindent When $\bv{d}^m$ is implemented with Gaussian random vectors, we have fourth moment $c_4 = 3$, so obtain the upper bound:
\begin{align*}
\|\bv{d}^m(\bv{A}) - \diag(\bv{A})\|_2 \leq \sqrt{\frac{2}{m\delta}}\cdot \|\bv{A}\|_F.
\end{align*} 

\subsection{High Probability Bounds}
\label{sec:general_high_prob}

To obtain an error bound of $\epsilon \cdot E$ with probability $1-\delta$, \cref{cor:gen_simple} requires $m = O\left({1}/{\epsilon^2\delta}\right)$ matrix-vector products with $\bv{A}$. The linear dependence on $1/\delta$ is worse than the logarithmic dependence in  \cref{thm:main}, which requires $m = O\left({\log(1/\delta)}/{\epsilon^2}\right)$ matrix-vector products for a comparable guarantee. It is possible to improve the dependence on $\delta$  using a high-dimensional analog of the standard ``median trick''. Specifically, we have:
\begin{corollary}
	\label{cor:median_trick}
	Consider the following estimation procedure that computes multiple independent generalized stochastic diagonal estimators (\cref{def:gen_diag_est}), all implemented with mean $0$ variance $1$ random variables with 4th moment $\leq c_4$. 
	\begin{itemize}
		\item Compute  $r = \lceil 10\log(1/\delta)\rceil$ independent generalized diagonal estimators $\bv{d}^{m}_1(\bv{A}), \ldots, \bv{d}^{m}_q(\bv{A})$.
		\item For all $i \in 1, \ldots, r$,  compute the distance $\|\bv{d}^{m}_i(\bv{A}) - \bv{d}^{m}_j(\bv{A})\|_2$ for all $j \neq i$. Let $B_i$ be the $\lfloor \frac{r}{2} \rfloor$ smallest distance.
		\item Return $\bv{d}^{m}_{i^*}(\bv{A})$, where $i^*=\argmin_{i\in 1,\ldots, r} B_i$.
	\end{itemize}
	There is an absolute constant $c$ so that, for any $\delta \in (0,1)$ and $m\geq 1$, with probability $1-\delta$:
	\begin{align*}
		\|\bv{d}^{m}_{i^*}(\bv{A}) - \diag(\bv{A})\|_2 \leq \sqrt{\frac{c}{m}}\cdot E,
	\end{align*}
where $E =\sqrt{(1+ c_4 - 2)  \|\diag(\bv{A})\|_2^2 + \|\bar{\bv{A}}\|_F^2}$, as before.
\end{corollary}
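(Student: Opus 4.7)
The plan is to use a geometric analog of the classical median-of-means / boosting trick. First, invoke \Cref{cor:gen_simple} with a \emph{constant} failure probability, say $\delta_0 = 1/4$, which implies that a single estimator $\bv{d}^m(\bv{A})$ satisfies $\|\bv{d}^m(\bv{A})-\diag(\bv{A})\|_2 \leq \tau := 2 E/\sqrt{m}$ with probability at least $3/4$. Call such an estimator \emph{good}. Run $r = \lceil 10\log(1/\delta)\rceil$ independent copies and apply a standard Chernoff bound to the indicator variables of ``good'' to conclude that, with probability at least $1-\delta$, the number of good estimators satisfies $|G| \geq 3r/4$. The rest of the argument is then deterministic given this event.

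Next comes the key geometric step. If $i \in G$, then by the triangle inequality every other good estimator lies within distance $2\tau$ of $\bv{d}^m_i$, so there are at least $|G|-1 \geq 3r/4 - 1 \geq \lfloor r/2\rfloor$ other estimators within $2\tau$ of $\bv{d}^m_i$. By the definition of $B_i$ as the $\lfloor r/2\rfloor$-th smallest distance, this gives $B_i \leq 2\tau$ for every $i \in G$. Since $i^* = \argmin_i B_i$, we get $B_{i^*} \leq 2\tau$.

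Now observe that the ball of radius $B_{i^*}$ around $\bv{d}^m_{i^*}(\bv{A})$ contains at least $\lfloor r/2\rfloor$ other estimators. Because the bad set has size $r - |G| \leq r/4 < \lfloor r/2\rfloor$, by pigeonhole at least one good estimator $\bv{d}^m_j(\bv{A})$ with $j\in G$ lies within distance $B_{i^*} \leq 2\tau$ of $\bv{d}^m_{i^*}(\bv{A})$. A final triangle inequality gives
\begin{align*}
\|\bv{d}^m_{i^*}(\bv{A})-\diag(\bv{A})\|_2 \leq \|\bv{d}^m_{i^*}(\bv{A})-\bv{d}^m_j(\bv{A})\|_2 + \|\bv{d}^m_j(\bv{A})-\diag(\bv{A})\|_2 \leq 2\tau + \tau = 3\tau,
\end{align*}
which is of the form $\sqrt{c/m}\cdot E$ for an absolute constant $c$ (e.g.\ $c = 36$).

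The main obstacle is purely bookkeeping: picking the constant $\delta_0$ for \Cref{cor:gen_simple} and the constant $10$ in the definition of $r$ so that Chernoff guarantees $|G| \geq 3r/4$ with probability at least $1-\delta$ for \emph{all} $\delta \in (0,1)$ (including very small $r$, where one may need to handle a finite number of small-$r$ cases separately or just use a slightly larger constant in front of $\log(1/\delta)$). The geometric/pigeonhole part is standard and tight up to constants, and the specific structure of diagonal estimation plays no role here beyond the bound of \Cref{cor:gen_simple} being applied to the Euclidean norm $\|\bv{d}^m(\bv{A})-\diag(\bv{A})\|_2$; the same boosting scheme works for any Hilbert-space-valued estimator with bounded second moment.
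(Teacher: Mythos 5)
Your proposal is correct and follows essentially the same route as the paper: use \Cref{cor:gen_simple} with a constant failure probability to call single estimators ``good,'' apply Chernoff so that with probability $1-\delta$ at least a constant fraction are good, then use the triangle inequality to bound $B_i$ for good $i$, pigeonhole to find a good estimator among the $\lfloor r/2\rfloor$ closest to $\bv{d}^m_{i^*}$, and close with a final triangle inequality. The only cosmetic difference is that you target a $3r/4$ good fraction while the paper uses $r/2$, making your pigeonhole bookkeeping slightly cleaner; both yield the same guarantee up to the absolute constant.
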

As desired, \cref{cor:median_trick} implies that $m = O\left({\log(1/\delta)}/{\epsilon^2}\right)$ matrix-vector multiplies are required to obtain error $\epsilon \cdot  E$ with probability $(1-\delta)$. 


\begin{proof}
	By \cref{cor:gen_simple}, for each $i \in 1, \ldots, r$ and a constant $c$, we have that $\|\bv{d}^{m}_i(\bv{A}) - \diag(\bv{A})\|_2 \leq c\sqrt{{1}/{m}}\cdot E$ with probability $19/20$. By a standard Chernoff bound argument, it follows that, with probability greater than $1 - e^{-r/10} = 1 - \delta$, $\|\bv{d}^{m}_i(\bv{A}) - \diag(\bv{A})\|_2 \leq c\sqrt{{1}/{m}}\cdot E$ for at least half of all values of $i$ (see e.g. Proposition 2.4 in \cite{AngluinValiant:1979}). Accordingly, by triangle inequality, we have that:
	\begin{align}
		\label{eq:median_1}
		\text{there is at least one } i \in 1, \ldots, r \text{ for which } B_i \leq 2c\sqrt{{1}/{m}}\cdot E.
	\end{align} 
	Also by pigeonhole principal, there must be at least one value of $j$ which is both one of the $\lfloor r/2\rfloor$ closest points to $\bv{d}^{m}_{i^*}$ and for which $\|\bv{d}^{m}_j(\bv{A}) - \diag(\bv{A})\|_2 \leq c\sqrt{{1}/{m}}\cdot E$.  I.e., there is some $j$ such that $\|\bv{d}^{m}_j(\bv{A}) - \diag(\bv{A})\|_2 \leq c\sqrt{{1}/{m}}\cdot E$ and $\|\bv{d}^{m}_j(\bv{A}) - \bv{d}^{m}_{i^*}(\bv{A}) \leq B_{i^*}$. By triangle inequality we thus have:
	\begin{align*}
		\|\bv{d}^{m}_{i^*}(\bv{A}) - \diag(\bv{A})\|_2 \leq c\sqrt{{1}/{m}}\cdot E + B_{i^*} \leq 3c\sqrt{{1}/{m}}\cdot  E.
	\end{align*}
The last inequality follows from \cref{eq:median_1} since $B_{i^*} \leq B_i$ for all $i \in 1, \ldots, r$. This completes the proof.
\end{proof}

If instead of just assuming that  $\bv{g}^1, \ldots, \bv{g}^m$ contain entries with bounded $4^\text{th}$ moment, if we make the stronger assumption that they contain i.i.d. sub-Gaussian entries, then we can obtain a bound for the generalized diagonal estimator that is more comparable to \cref{thm:main} and does not require the median trick to obtain a dependence the ideal dependence on $\log(1/\delta)$. Specifically, in \cref{app:subgauss}, we prove the following result: 
\begin{theorem}
	\label{thm:subgauss}
	Let $\bv{d}^m$ be a generalized stochastic diagonal estimator (\cref{def:gen_diag_est}) for $\bv{A}\in \R^{n\times n}$ implemented with any {symmetric}, mean $0$, and variance $1$ random variable that is sub-Gaussian with parameter $K$. Then with probability $1-\delta$:
	\begin{align*}
		\|\bv{d}^m(\bv{A}) - \diag(\bv{A})\|_2 \leq cK^2 \cdot \sqrt{\frac{\log(2/\delta)}{m} + \frac{\log^4(2/\delta)}{m^2} }\|\bv{A}\|_F.
	\end{align*}
\end{theorem}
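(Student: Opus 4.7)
The plan is to mirror the proof of \Cref{thm:main}, using the same symmetrization and scalar-comparison framework, but replacing MGF-based estimates with $p$-th moment bounds to accommodate the fact that $\|\bv{e}_z\|_2^2$ is now a genuine degree-$4$ polynomial in $\bv{g}^z$, rather than a pure quadratic form.

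Because $\bv{g}^z$ is assumed symmetric, the symmetrized vector $\bv{w}_z = \bv{e}_{z,1} - \bv{e}_{z,2}$ is again symmetric, so the argument of \Cref{sec:main_proof} yielding the chain $\E[\|\bv{e}\|_2^{2k}] \leq \E[\|\tilde{\bv{e}}\|_2^{2k}] \leq \E[\tilde{E}^{2k}]$ for the scalar random variable $\tilde{E} = \sum_{z=1}^m r_z \|\bv{w}_z\|_2$ applies verbatim. The task thus reduces to bounding moments of $\tilde{E}^2$.

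The critical new step is controlling the moments of $\|\bv{w}_z\|_2^2$. Writing $(e_z)_i = A_{ii}((g_i^z)^2-1) + \sum_{j\neq i}A_{ij}g_i^z g_j^z$ reveals $\|\bv{e}_z\|_2^2$ as a sum of squared degree-$2$ polynomials in sub-Gaussian variables. Applying standard polynomial concentration inequalities for sub-Gaussian vectors (in the spirit of Hanson--Wright-type bounds from \cite{Vershynin:2018}) to each entry, together with Minkowski over $i$, should give an $L^p$ bound of the form $\bigl\|\|\bv{w}_z\|_2^2\bigr\|_p \lesssim K^4 p^4 \|\bv{A}\|_F^2$, corresponding to sub-Weibull-$1/2$ tail behavior. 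Taking square roots, $\|W_z\|_p \lesssim K^2 p^2 \|\bv{A}\|_F$ for each symmetric summand $W_z = r_z\|\bv{w}_z\|_2$.

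Summing across $z$, a Rosenthal or Khintchine-type moment inequality yields
\begin{align*}
\|\tilde{E}\|_p \lesssim K^2 \|\bv{A}\|_F \bigl(\sqrt{mp} + p^2\bigr),
\end{align*}
where the first term arises from variance aggregation over $m$ i.i.d.\ summands and the second from the envelope on any single summand. Squaring and applying Markov's inequality at $p = \log(2/\delta)$ then gives $\tilde{E}^2 \lesssim K^4\|\bv{A}\|_F^2(m\log(2/\delta) + \log^4(2/\delta))$ with probability $1-\delta$. Combining with the moment-comparison chain together with $\|\bv{e}\|_2^2 = m^2 \|\bv{d}^m(\bv{A}) - \diag(\bv{A})\|_2^2$, dividing by $m^2$ and taking a square root produces the stated bound. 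The principal obstacle is obtaining the correct polynomial-in-$p$ growth in the $L^p$ bound on $\|\bv{w}_z\|_2^2$: the exponent $4$ there translates directly into the $\log^4$ in the final theorem, and one must handle carefully the non-tetrahedral structure of the polynomial (which contains $(g_i^z)^2$ terms that break the direct reduction to a quadratic form used in \Cref{thm:main}) to keep the estimate dimension-free.
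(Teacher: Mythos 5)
Your overall plan is coherent and genuinely different from the paper's. The paper \emph{does not} extend the scalar-comparison chain of \Cref{sec:main_proof} to the sub-Gaussian case. Instead it reduces $\|\bv{d}^m(\bv{A})-\diag(\bv{A})\|_2$ to the symmetric difference $\|\bv{d}^m_1(\bv{A})-\bv{d}^m_2(\bv{A})\|_2$ (\Cref{lem:subgauss_symmetric}), conditions on the realizations $\bv{w}_1,\ldots,\bv{w}_m$ and applies McDiarmid's inequality in the Rademacher signs, and then separately controls $\sum_z\|\bv{w}_z\|_2^2$ via an $\alpha$-sub-exponential tail bound from \cite{GotzeSambaleSinulis:2021}. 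The single-sample tail bound (\Cref{lem:gen_norm_bound}) is itself obtained by a two-stage decomposition $\|\bv{g}\odot\bv{A}\bv{g}\|_2^2 = \bv{r}^T\tilde{\bv{A}}^T\tilde{\bv{A}}\bv{r}$ into nested Hutchinson estimators, and gives a $\exp(-c\gamma^{1/3})$ tail (that is, $p^3$ moment growth), not the $p^4$ you assume. The $\log^4$ in the paper's theorem then arises from the \emph{product} of the $\gamma$ from McDiarmid and the $\log^3$ coming from the sum bound, not from any one summand having $p^4$ moment growth.

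You correctly observe that the symmetrization and scalar-comparison chain $\E[\|\bv{e}\|_2^{2k}]\le\E[\|\tilde{\bv{e}}\|_2^{2k}]\le\E[\tilde{E}^{2k}]$ only uses the symmetry of each $\bv{w}_z$ and so carries over to any symmetric sub-Gaussian entries; this gives an approach that is arguably cleaner than the paper's. The problem is the step you yourself flag as the obstacle: you hand-wave the $L^p$ bound $\bigl\|\|\bv{w}_z\|_2^2\bigr\|_p\lesssim K^4p^4\|\bv{A}\|_F^2$ as something that "standard polynomial concentration\dots should give," but you never derive it, and in fact it does not appear to be what the natural argument yields. Carrying out the very route you name -- a row-wise bound plus Minkowski -- gives something sharper: since $(e_z)_i = A_{ii}((g_i^z)^2-1) + g_i^z\sum_{j\ne i}A_{ij}g_j^z$ is a product of two independent sub-Gaussians plus a centered square of a sub-Gaussian, one has $\|(e_z)_i\|_{\psi_1}\lesssim K^2\|\bv{a}_i\|_2$ (\cite{Vershynin:2018}, Lemma 2.7.7), so $\|(e_z)_i\|_{2p}\lesssim K^2p\|\bv{a}_i\|_2$ and
\begin{align*}
\bigl\|\|\bv{e}_z\|_2^2\bigr\|_p \;\le\; \sum_{i=1}^n \|(e_z)_i^2\|_p \;=\; \sum_{i=1}^n\|(e_z)_i\|_{2p}^2 \;\lesssim\; K^4 p^2\sum_{i=1}^n\|\bv{a}_i\|_2^2 \;=\; K^4p^2\|\bv{A}\|_F^2,
\end{align*}
which is $p^2$, not $p^4$. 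Besides being unproved, your $p^4$ is also internally inconsistent with the "sub-Weibull-$1/2$" label you attach to it (that label corresponds to $p^2$). If the $p^2$ bound is correct, your pipeline would actually give a \emph{stronger} bound than the theorem (replacing $\log^4/m^2$ by $\log^2/m^2$), which the paper itself speculates should be possible; if you believe $p^4$, you need to explain why the Minkowski argument is wrong or inapplicable. Either way, the crucial step currently rests on a claim that is neither proven nor clearly the right exponent, so the argument has a genuine gap. As a smaller caution, the Rosenthal step also needs care: the classical $C(p)\asymp p/\log p$ Rosenthal constant would introduce a parasitic factor; you need an Adamczak/Latała-type moment bound for symmetric sub-Weibull summands to obtain the clean $\sqrt{mp}+p^{1/\alpha}$ form you invoke.
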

Typically $K^2$ is a small constant (e.g. for the previously studied case of Gaussian random variables \cite{BastonNakatsukasa:2022}), so \cref{thm:subgauss} nearly matches \cref{thm:main}, except in two ways. First, as in \cref{cor:gen_simple}, it has a dependence on $\|\bv{A}\|_F^2$ instead of  $\|\bar{\bv{A}}\|_F^2$. In general, $\|\bar{\bv{A}}\|_F^2$ is always smaller. This is inherent: as shown in \cref{lem:general_expectation_lemma}, the expected error of $\bv{d}^m(\bv{A})$ has a dependence on $\|\bv{A}\|_F^2$ unless the fourth moment equals $1$, but this is only the case for $\pm 1$ Rademacher random variables. All other random variables with variance $1$ have higher $4^\text{th}$ moment.  

Second,  \cref{thm:subgauss} has an extra dependence on $\frac{\log^4(2/\delta)}{m^2}$ that does not appear in \cref{thm:main}. While this is a lower order term for large $m$ -- specifically, the bound matches \cref{thm:main} when $m \geq \log^{1.5}(1/\delta)$ -- we believe it can likely be improved or removed entirely, possibly by following a different proof technique.
  
\appendix

\section*{Acknowledgements}
We would like to thank Yuji Nakatsukasa, Eric Hallman, and Cameron Musco for helpful discussions. Christopher Musco was supported by NSF CAREER award No. 2045590. Prathamesh Dharangutte was supported by  NSF award No. CCF-2118953.

\bibliographystyle{apalike}
\bibliography{diag_est}


\section{General Sub-Gaussian Analysis}
\label{app:subgauss}
In this section, we focus on proving \cref{thm:subgauss} for general sub-Gaussian stochastic diagonal estimators. The proof follows a different approach and is more involved than our proof for Hutchinson's estimator in \cref{sec:main_proof}, which strongly relies on the fact that the estimator uses Rademacher random variables. 

\subsection{Initial Symmetrization}
We first show how  \cref{thm:subgauss} can be reduced to an equivalent statement involving a symmetric random vector:
\begin{lemma}
	\label{lem:subgauss_symmetric}
	Let $\bv{d}^{m}_1, \bv{d}^{m}_2$ be independent generalized stochastic diagonal estimators (\cref{def:gen_diag_est}) for $\bv{A}$ implemented with any {symmetric}, mean $0$, and variance $1$ random variable that is sub-Gaussian with parameter $K$.  For any $\delta \in (0,1)$ and $m\geq 1$, let $m = O(\log(1/\delta)/\epsilon^2)$. Then with probability $1-\delta$:
	\begin{align*}
		\|\bv{d}^{m}_1(\bv{A}) -  \bv{d}^{m}_2(\bv{A})\|_2 \leq cK^2 \cdot \sqrt{\frac{\log(2/\delta)}{m} + \frac{\log^4(2/\delta)}{m^2} }\|\bv{A}\|_F.
	\end{align*}
\end{lemma}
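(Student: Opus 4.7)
The plan is to follow the template of \cref{sec:main_proof} but replace the final Hanson--Wright step (which exploited $(g_i^z)^2 = 1$) with a vector-valued Bernstein (Yurinskii-style) argument driven by a sub-exponential tail bound on the single-sample squared error. Let $\bv{g}^z$ and $\bv{h}^z$ denote the iid sub-Gaussian vectors used in $\bv{d}^{m}_1$ and $\bv{d}^{m}_2$, and set $\bv{w}_z = \bv{g}^z \odot \bv{A}\bv{g}^z - \bv{h}^z \odot \bv{A}\bv{h}^z$ so that $m(\bv{d}^{m}_1(\bv{A}) - \bv{d}^{m}_2(\bv{A})) = \sum_z \bv{w}_z$. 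Each $\bv{w}_z$ is independent, mean zero, and symmetric (swapping $\bv{g}^z \leftrightarrow \bv{h}^z$ flips the sign while preserving the joint law), so the initial Jensen step \cref{eq:moment_bound_1} is automatic here.

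The second moment $\E\|\bv{w}_z\|_2^2$ can be computed exactly as in \cref{lem:general_expectation_lemma} and is of order $K^4 \|\bv{A}\|_F^2$; summed over $m$ samples this is what contributes the $K^2 \|\bv{A}\|_F \sqrt{\log(2/\delta)/m}$ part of the final bound. The main technical step is a tail bound for $\|\bv{w}_z\|_2$. Unlike in the Rademacher case, the identity \cref{eq:diag_to_trace} is unavailable, and $\|\bv{w}_z\|_2^2$ is a genuine degree-$4$ polynomial in the sub-Gaussian variables. I would decompose
\begin{align*}
\|\bv{g} \odot \bv{A}\bv{g}\|_2^2 \;=\; \|\bv{A}\bv{g}\|_2^2 \;+\; \textstyle\sum_i (g_i^2 - 1)(\bv{A}\bv{g})_i^2,
\end{align*}
control the quadratic piece $\bv{g}^T \bv{A}^T\bv{A}\bv{g}$ by Hanson--Wright applied with $\bv{B} = \bv{A}^T\bv{A}$ (just as in \cref{lem:hutch_moment_bound}), and handle the quartic piece by a decoupling argument: conditionally replace one copy of $\bv{g}$ inside $(\bv{A}\bv{g})_i$ by an independent copy $\bv{g}'$ so that, given $\bv{g}$, what remains is a quadratic form in $\bv{g}'$ to which Hanson--Wright again applies. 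Combining these estimates with the trivial bound $\|\bv{w}_z\|_2^2 \leq 2\|\bv{g}^z \odot \bv{A}\bv{g}^z\|_2^2 + 2\|\bv{h}^z \odot \bv{A}\bv{h}^z\|_2^2$ yields that $\|\bv{w}_z\|_2$ is sub-exponential with parameter $O(K^2 \|\bv{A}\|_F)$.

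With the single-sample tail in hand, I would apply a vector-valued Bernstein inequality to the iid symmetric $\bv{w}_z$. Truncating each $\bv{w}_z$ at level $M = c K^2 \|\bv{A}\|_F \log(2m/\delta)$ removes no mass with probability $\geq 1 - \delta/2$ by a union bound, and the Yurinskii-style bound together with the variance estimate above gives, with probability $1 - \delta$,
\begin{align*}
\Big\|\textstyle\sum_z \bv{w}_z\Big\|_2 \;\lesssim\; K^2 \|\bv{A}\|_F \sqrt{m \log(2/\delta)} \;+\; K^2 \|\bv{A}\|_F \log^2(2/\delta),
\end{align*}
which, after dividing by $m$ and combining the two contributions under a single square root, is exactly the claimed $c K^2 \sqrt{\log(2/\delta)/m + \log^4(2/\delta)/m^2}\, \|\bv{A}\|_F$. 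The principal obstacle is the quartic tail bound: one must establish sub-exponential concentration of the degree-$4$ chaos $\sum_i (g_i^2-1)(\bv{A}\bv{g})_i^2$ around its mean with $\|\bv{A}\|_F$-type scaling and no hidden $n$ or spectral-norm factor. The extra $\log^4(2/\delta)/m^2$ term in the statement arises precisely from the $M \cdot \log(2/\delta)$ contribution of the Yurinskii bound once $M$ is set to the sub-exponential truncation level $O(K^2 \|\bv{A}\|_F \log(2/\delta))$, which is consistent with the authors' remark that this term is likely improvable by sharper chaos tail control.
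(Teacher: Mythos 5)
Your overall architecture runs parallel to the paper's, but diverges at both decisive steps, and the first of those steps has a genuine gap that you yourself flag as ``the principal obstacle.''

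The single-sample tail bound is the crux, and your route to it has a real hole. You decompose $\|\bv{g}\odot\bv{A}\bv{g}\|_2^2 = \|\bv{A}\bv{g}\|_2^2 + \sum_i(g_i^2-1)(\bv{A}\bv{g})_i^2$, handle the first term via Hanson--Wright, and propose to handle the second by ``decoupling'' one copy of $\bv{g}$ inside $(\bv{A}\bv{g})_i$. But $\sum_i (g_i^2-1)(\bv{A}\bv{g})_i^2$ is not a multilinear chaos: expanding it produces monomials with $g_i^4$ and $g_i^3$ factors because the $(g_i^2-1)$ weight involves \emph{the same} coordinate $g_i$ that also appears inside $(\bv{A}\bv{g})_i = \sum_j A_{ij}g_j$. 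Standard de la Pe\~na--Montgomery-Smith decoupling applies to forms with off-diagonal (multi-index-distinct) structure, so it does not directly give what you want here; you would first have to peel off the diagonal pieces and control each by hand. Even if one pushes this through using a general-order chaos tail (Latala/Adamczak--Wolff type), one then has to verify that \emph{all} the partition norms of the $4$-tensor collapse to $O(K^4\|\bv{A}\|_F^2)$, which is exactly the part you identify as the obstacle and do not resolve. Your conclusion, that $\|\bv{w}_z\|_2$ is sub-exponential with parameter $O(K^2\|\bv{A}\|_F)$, i.e., $\Pr[\|\bv{e}\|_2^2 \geq \gamma K^4\|\bv{A}\|_F^2] \leq 2e^{-c\gamma^{1/2}}$, is actually stronger than what the paper proves; \cref{lem:gen_norm_bound} only establishes the exponent $\gamma^{1/3}$.

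The paper sidesteps the degree-$4$ chaos difficulty entirely by exploiting symmetry differently. Using $g_i \stackrel{d}{=} r_i g_i$ for Rademacher $r_i$, it writes $\|\bv{g}\odot\bv{A}\bv{g}\|_2^2 = \bv{r}^T\tilde{\bv{A}}^T\tilde{\bv{A}}\bv{r}$ with $\tilde{\bv{A}}=\bv{G}\bv{A}\bv{G}$, applies classical Hanson--Wright conditionally on $\bv{g}$ to the Rademacher quadratic form, and then bounds the random scale $\|\tilde{\bv{A}}\|_F^2 = (\bv{g}^2)^T(\bv{A}\circ\bv{A})\bv{g}^2$ by a Hanson--Wright inequality for sub-exponential entries \cite{GotzeSambaleSinulis:2021}. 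This never requires a direct order-$4$ chaos bound, at the cost of a lossy multiplicative split $\gamma=\gamma^{2/3}\cdot\gamma^{1/3}$. The paper's final step is also different from yours: instead of a Yurinskii/truncation Bernstein argument, it writes $\bv{d}^m_1-\bv{d}^m_2 = \frac1m\sum_z r_z\bv{w}_z$ using the symmetry of $\bv{w}_z$, applies McDiarmid's inequality to the bounded-difference function $f(r_1,\ldots,r_m)=\|\frac1m\sum_z r_z\bv{w}_z\|_2$ conditional on the $\bv{w}_z$, and then controls $\sum_z\|\bv{w}_z\|_2^2$ via the $\alpha$-subexponential Bernstein inequality of \cite{GotzeSambaleSinulis:2021}. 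Both final-step routes (truncated vector Bernstein vs.\ conditional McDiarmid) are reasonable and both happen to yield the $\log^4(2/\delta)/m^2$ correction; if your stronger $\gamma^{1/2}$ tail were actually established, one could in fact improve this to $\log^3(2/\delta)/m^2$, so there is a small internal inconsistency between the strength of the tail bound you assert and the weaker final bound you claim. To make your route precise, you would need to supply the diagonal-free decoupling and the chaos-norm estimate; otherwise the safer move is to import a bound of the form in \cref{lem:gen_norm_bound}.
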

Before proving \cref{lem:subgauss_symmetric}, we show how it can be used to prove \cref{thm:subgauss}.
\begin{proof}[Proof of \cref{thm:subgauss}]
	Let $\ell = \log_2(1/\delta')$ for some $\delta'< \delta$ to be chosen later and consider independent diagonal estimators $\bv{d}^m_2, \ldots, \bv{d}^m_{\ell+1}$. We will not actually compute these estimators -- they are hypothetical and introduced for the purpose of analysis. Any random variable with sub-Gaussian parameter $K$ has fourth moment bounded by $O(K^4)$ (see \cite{Vershynin:2018}, Proposition 2.5.2). Accordingly, by \cref{cor:gen_simple}, for some constant $c$, we have that, with probability $1/2$, $\|\bv{d}^m_i(A) - \diag(\bv{A})\|_2 \leq c K^2 \|\bv{A}\|_F$ for each $i\in 2,\ldots, \ell+1$. 
	It follows that, with probability $\delta'$, $\|\bv{d}^m_j(\bv{A}) - \diag(\bv{A})\|_2 \leq  \frac{c K^2}{\sqrt{m}}\|\bv{A}\|_F$ for at least one value of $j$. At the same time, combining \cref{lem:subgauss_symmetric} with a union bound, we know that \emph{for all} $i$ simultaneously, with probability $1-\delta'\log_2(1/\delta')$, $\|\bv{d}^{m}_1(\bv{A}) -  \bv{d}^{m}_i(\bv{A})\|_2 \leq cK^2 \cdot \sqrt{\frac{\log(2/\delta')}{m} + \frac{\log^4(2/\delta')}{m^2} }\|\bv{A}\|_F.$
It follows by triangle inequality and another union bound that with probability
 $1-\delta'\log_2(1/\delta') - \delta '$, 
 \begin{align*}
 	\|\bv{d}^{m}_1(\bv{A}) -  \diag(\bv{A})\|_2 \leq \|\bv{d}^{m}_1(\bv{A}) -  \bv{d}^{m}_i(\bv{A})\|_2  + \|\bv{d}^{m}_i(\bv{A}) -  \diag(\bv{A})\|_2 \leq  2cK^2 \cdot \sqrt{\frac{\log(2/\delta)}{m} + \frac{\log^4(2/\delta)}{m^2} }\|\bv{A}\|_F.
 \end{align*} 
Setting $\delta' = c'\delta^2$ for sufficiently small constant $c'$ yields \cref{thm:subgauss}.
\end{proof}

\subsection{Single Sample Norm Bound}
In order to prove \cref{lem:subgauss_symmetric}, we first prove a tail bound on the norm of a single-sample sub-Gaussian stochastic diagonal estimator. This intermediate result is the crux of our analysis, and from it  \cref{lem:subgauss_symmetric} follows relatively directly.
\begin{lemma}
	\label{lem:gen_norm_bound} Let $\bv{e} = \bv{g}\odot \bv{A}\bv{g} - \diag(\bv{A})$, where $\bv{g} \in \R^n$ contains i.i.d. symmetric, mean $0$, and variance $1$ sub-Gaussian random variables with parameter $K$. For any $\gamma \geq 0$ and a fixed constant $c$ we have that
	\begin{align*}
	\Pr\left[\|\bv{e}\|_2^2 \geq \gamma K^4 \|\bv{A}\|_F^2 \right] \leq 2e^{-c\gamma^{1/3}}.
	\end{align*}
\end{lemma}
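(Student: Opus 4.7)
The random variable $\|\bv{e}\|_2^2$ is a degree-four polynomial in the i.i.d.\ sub-Gaussian coordinates of $\bv{g}$, so the target is a dimension-free sub-Weibull-$1/3$ tail, which is already weaker than the $1/2$-tail one would hope for from a sharp polynomial-chaos bound. My plan is to trade away this sharpness by splitting $\bv{e}$ into a clean off-diagonal piece and a diagonal correction and then controlling each by quadratic-form tools.

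Writing $e_i = g_i(\bar{\bv{A}}\bv{g})_i + A_{ii}(g_i^2 - 1)$ decomposes $\bv{e} = \bv{e}_{\text{off}} + \bv{e}_{\text{diag}}$, and by the triangle inequality $\|\bv{e}\|_2^2 \leq 2\|\bv{e}_{\text{off}}\|_2^2 + 2\|\bv{e}_{\text{diag}}\|_2^2$. The diagonal piece $\|\bv{e}_{\text{diag}}\|_2^2 = \sum_i A_{ii}^2(g_i^2-1)^2$ is a sum of independent, non-negative random variables with Weibull-$1/2$ tails (since $g_i^2 - 1$ is sub-exponential with parameter $O(K^2)$), and a standard Bernstein/moment calculation yields a tail at least as strong as $e^{-c\gamma^{1/2}}$ against the target $K^4\|\diag(\bv{A})\|_2^2 \leq K^4\|\bv{A}\|_F^2$—comfortably beating $e^{-c\gamma^{1/3}}$.

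The harder off-diagonal piece is $\|\bv{e}_{\text{off}}\|_2^2 = \bv{g}^T \bar{\bv{A}}^T G^2 \bar{\bv{A}}\bv{g}$ with $G = \diag(\bv{g})$. The plan is to express the target bound as a product of two factors, each controlled on its own high-probability event. First, Hanson-Wright applied to the pure quadratic form $\bv{g}^T \bar{\bv{A}}^T \bar{\bv{A}} \bv{g}$ gives $\|\bar{\bv{A}}\bv{g}\|_2^2 \leq C K^2\gamma^{1/3}\|\bar{\bv{A}}\|_F^2$ in the linear-tail regime (using $\|\bar{\bv{A}}^T\bar{\bv{A}}\|_2 \leq \|\bar{\bv{A}}\|_F^2$), with failure probability $\leq e^{-c\gamma^{1/3}}$. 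Second, I aim to show $\sum_i g_i^2 (\bar{\bv{A}}\bv{g})_i^2 \leq CK^2\gamma^{2/3}\|\bar{\bv{A}}\bv{g}\|_2^2$ on another event of comparable probability. The key structural fact is that, because $\bar{\bv{A}}$ has zero diagonal, $g_i$ is independent of $(\bar{\bv{A}}\bv{g})_i$ for every $i$; using decoupling (e.g.\ replacing the outer $\bv{g}$ by an independent copy $\bv{g}'$) followed by a conditional Hanson-Wright bound on the resulting bilinear-in-$\bv{g}'$ form should yield the required concentration. Multiplying the two events gives the $CK^4\gamma\|\bar{\bv{A}}\|_F^2$ bound on $\|\bv{e}_{\text{off}}\|_2^2$ on an event of probability $\geq 1 - O(e^{-c\gamma^{1/3}})$, and combining with the diagonal bound via a union bound finishes the proof.

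The main obstacle is the second event: producing a dimension-free estimate of $\sum_i g_i^2 (\bar{\bv{A}}\bv{g})_i^2$ despite the strong coupling between the random weights $g_i^2$ and the random vector $\bar{\bv{A}}\bv{g}$. The naive estimate $\leq \max_i g_i^2 \cdot \|\bar{\bv{A}}\bv{g}\|_2^2$ costs a $\log n$ factor that must be avoided. The $1/3$ exponent in the final tail is exactly the cost of balancing the two events at rates $\gamma^{1/3}$ and $\gamma^{2/3}$ so that the product matches the target $\gamma$; a more refined polynomial-chaos argument could likely recover the optimal $1/2$-tail, but the $1/3$-tail already suffices for \cref{thm:subgauss}.
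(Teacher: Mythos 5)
Your factorization step for the off-diagonal piece is where the argument breaks, and the sketch you give cannot be made rigorous as stated. You want to show that $\sum_i g_i^2 (\bar{\bv{A}}\bv{g})_i^2 \leq CK^2\gamma^{2/3}\|\bar{\bv{A}}\bv{g}\|_2^2$ with failure probability $\leq e^{-c\gamma^{1/3}}$. But both sides of that inequality are functions of the \emph{same} $\bv{g}$ and are strongly correlated: what you are bounding is the ratio $\sum_i g_i^2 v_i^2 / \|\bv{v}\|_2^2$ with $\bv{v}=\bar{\bv{A}}\bv{g}$, which is not a chaos, and neither Hanson--Wright nor de la Pe\~na-type decoupling gives a tail for such a ratio of dependent random quantities. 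The per-coordinate independence of $g_i$ from $v_i$ does not rescue this, since every other coordinate $v_j$ depends on $g_i$. And if you try to decouple the quadratic weights $g_i^2$ into $(g_i')^2$ and condition, you hit the fact that decoupling inequalities apply to tetrahedral (distinct-index) chaoses, whereas $\sum_{i,j,k} g_i^2\bar{A}_{ij}\bar{A}_{ik}g_jg_k$ has a non-tetrahedral $j=k$ slice equal to ${\bv{g}^2}^T(\bar{\bv{A}}\circ\bar{\bv{A}})\bv{g}^2$, whose weights $g_i^2$ are not mean-zero and cannot be replaced by an independent copy without a careful centering argument. That slice is precisely what your proposed factorization event cannot absorb.

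The paper's proof uses the symmetry hypothesis in a sharper way that avoids this problem: since each $g_i$ is symmetric, $\bv{g}\stackrel{d}{=}\bv{r}\odot\bv{g}$ for an independent Rademacher vector $\bv{r}$, which gives the \emph{exact} identity $\|\bv{g}\odot\bv{A}\bv{g}\|_2^2=\bv{r}^T\tilde{\bv{A}}^T\tilde{\bv{A}}\bv{r}$ with $\tilde{\bv{A}}=\bv{G}\bv{A}\bv{G}$, $\bv{G}=\diag(\bv{g})$. Conditionally on $\bv{g}$ this is a genuine Rademacher Hanson--Wright form around the random scalar $\|\tilde{\bv{A}}\|_F^2={\bv{g}^2}^T(\bv{A}\circ\bv{A})\bv{g}^2$, which the paper bounds by writing $\bv{g}^2=\bar{\bv{g}}+\bv{1}$ and invoking Bernstein and a generalized Hanson--Wright for sub-exponential vectors on the centered piece $\bar{\bv{g}}$. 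Your diagonal-term bound and your idea of splitting the rare event across exponents $\gamma^{1/3}$ and $\gamma^{2/3}$ are fine and parallel the paper's final union bound, but until step (3b) is replaced by an honest conditional argument -- for instance the Rademacher symmetrization above, or explicitly peeling off and bounding the $j=k$ slice before decoupling the tetrahedral $j\neq k$ remainder -- the proof has a real gap.
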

\begin{proof}
	In what follows, we will assume that  $\gamma \geq C$ for some sufficiently large constant $C$. If we can prove the result with some constant $c'$ in the exponent under this assumption, than we immediately have that $\Pr\left[\|\bv{e}\|_2^2 \geq \gamma K^4 \|\bv{A}\|_F \right] \leq 2e^{-c\gamma^{1/3}}$ for \emph{all} $\gamma \geq 0$, where $c=\min(c',1/2C^{1/3})$. This follows because $2e^{-\min(c',1/2C^{1/3})\gamma^{1/3}} > 1$ for any $\gamma \leq C$, so the bound is vacuously true for small values of $\gamma$.
	
	We start by applying triangle inequality and AM-GM inequality to give:
	\begin{align}
		\label{eq:initial_split}
		\|\bv{e}\|_2^2 \leq  2\|\bv{g}\odot \bv{A}\bv{g}\|_2^2 + 2\|\diag(\bv{A})\|_2^2 \leq 2\|\bv{g}\odot \bv{A}\bv{g}\|_2^2 + 2\|\bv{A}\|_F^2,
	\end{align}
	so we focus on bounding $\|\bv{g}\odot \bv{A}\bv{g}\|_2^2$. Following the proof of \cref{lem:general_expectation_lemma}, we have that:
	\begin{align*}
		\|\bv{g}\odot \bv{A}\bv{g}\|_2^2 = \sum_{i=1}^d \left(\sum_{j=1}^dA_{ij}g_ig_j\right)^2 =  \sum_{i=1}^d \sum_{j=1}^dA_{ij}g_ig_j\sum_{k=1}^dA_{ik}g_ig_k = \sum_{i=1}^d\sum_{j=1}^d\sum_{k=1}^d A_{ij}A_{ik}g_ig_ig_jg_k
	\end{align*}
Let $\bv{G}$ be a diagonal matrix containing $\bv{g}$ on its diagonal and let $\tilde{\bv{A}} = \bv{G}{\bv{A}}\bv{G}$.  
The matrix $\tilde{\bv{A}}$ has entries equal to $\tilde{A}_{ij} = A_{ij} g_ig_j$. Morever, note that, since each $g_i$ is assumed to by symmetric, it is identically distributed to $g_ir_i$ where $r_1, \ldots r_n$ are independent Rademacher random variables. So we equivalently have that:
\begin{align}
	\label{eq:tr_of_rand_a}
	\|\bv{g}\odot \bv{A}\bv{g}\|_2^2 =  \sum_{i=1}^d\sum_{j=1}^d\sum_{k=1}^d A_{ij}A_{ik}g_ig_ig_jg_kr_jr_kr_i^2= \sum_{i=1}^d\sum_{j=1}^d\sum_{k=1}^d \tilde{A}_{ij}\tilde{A}_{ik}r_jr_k = \bv{r}^T\bv{\tilde{A}}^T\bv{\tilde{A}}\bv{r}. 
\end{align}
We conclude that the quantity $\|\bv{g}\odot \bv{A}\bv{g}\|_2^2$ is exactly equal to Hutchinson's estimator (implemented with Rademacher random variables) applied to the matrix $\bv{B} = \tilde{\bv{A}}^T\tilde{\bv{A}}$. As such, we expect that $\|\bv{g}\odot \bv{A}\bv{g}\|_2^2$ will tightly concentrate around $\tr(\bv{B}) = \|\tilde{\bv{A}}\|_F^2$. So the main challenge becomes to bound $\|\tilde{\bv{A}}\|_F^2$, which itself is a random variable.

Fortunately, we can bound this quantity by again making a connection to trace estimation. We have that:
\begin{align*}
	\|\bv{\tilde{A}}\|_F^2 = \sum_{i=1}^d\sum_{j=1}^d {\bv{A}}_{ij}^2 (g_i)^2(g_j)^2 = {\bv{g}^2}^T ({\bv{A}}\circ{\bv{A}}) \bv{g}^2, 
\end{align*}
where $\bv{g}^2$ denotes the vector obtained by squaring each entry of $\bv{g}$.  Then, let $\bar{\bv{g}} = \bv{g}^2 - \bv{1}$ where $\bv{1}$ is an all ones vector. Note that $\E[\bar{\bv{g}}] = \bv{0}$ and since $\bv{g}$ is sub-Gaussian, $\bv{g}^2$ is a sub-exponential random variable with parameter $K^2$, and thus $\bar{\bv{g}}$ is sub-exponential with parameter $c'K^2$ for fixed constant $c'$ (see \cite{Vershynin:2018}, Exercise 2.7.10). We have that:
\begin{align}
	\label{eq:2_things_to_bound}
	{\bv{g}^2}^T ({\bv{A}}\circ{\bv{A}}) \bv{g}^2 = (\bar{\bv{g}}+\bv{1})^T({\bv{A}}\circ{\bv{A}})(\bar{\bv{g}}+\bv{1}) &= \bar{\bv{g}}^T({\bv{A}}\circ{\bv{A}})\bar{\bv{g}} + 2\bar{\bv{g}}^T({\bv{A}}\circ{\bv{A}})\bv{1} + \bv{1}^T({\bv{A}}\circ{\bv{A}})\bv{1} \nonumber \\
&= 	\bar{\bv{g}}^T({\bv{A}}\circ{\bv{A}})\bar{\bv{g}} + 2\bar{\bv{g}}^T({\bv{A}}\circ{\bv{A}})\bv{1} +\|\bv{A}\|_F^2.
\end{align}
We bound $\bar{\bv{g}}^T({\bv{A}}\circ{\bv{A}})\bar{\bv{g}}$ and $2\bar{\bv{g}}^T({\bv{A}}\circ{\bv{A}})\bv{1}$ separately,  starting with the second. Let $\bv{a}_i$ denote the $i^\text{th}$ row of $\bv{A}$ and note that $({\bv{A}}\circ{\bv{A}})\bv{1}$ has $i^\text{th}$ entry equal to $\|\bv{a}_i\|_2^2$. Since $\bar{\bv{g}}$ is mean $0$, we can apply a Bernstein inequality for sub-exponential random variables (\cite{Vershynin:2018}, Theorem 2.8.1) to the sum $\bar{\bv{g}}^T({\bv{A}}\circ{\bv{A}})\bv{1} = \sum_{i=1}^n\bar{g}_i\|\bv{a}_i\|_2^2$. We have that: 
\begin{align*}
\Pr\left[|\bar{\bv{g}}^T({\bv{A}}\circ{\bv{A}})\bv{1}| \geq tK^2\right] \leq 2\exp\left(-c''\min\left(\frac{t^2}{\sum_{i=1}^n \|\bv{a}_i\|_2^4}, \frac{t}{\max_{i} \|\bv{a}_i\|_2^2}\right)\right),
\end{align*}
where $c''$ is a fixed constant. Plugging in $t = \gamma\|\bv{A}\|_F^2$ and using that $\|\bv{A}\|_F^4 = \left(\sum_{i=1}^n \|\bv{a}_i\|_2^2\right)^2 \geq \sum_{i=1}^n \|\bv{a}_i\|_2^4$ and $\|\bv{A}\|_F^2 \geq \max_{i} \|\bv{a}_i\|_2^2$, we obtain the following bound fo any $\gamma \geq C$ for fixed constant $C$:
\begin{align}
	\label{eq:sub_exp_bern2}
	\Pr\left[|\bar{\bv{g}}^T({\bv{A}}\circ{\bv{A}})\bv{1}| \geq \gamma K^2 \|\bv{A}\|_F^2\right] \leq 2e^{-c\gamma}
\end{align}

Next we bound the $\bar{\bv{g}}^T({\bv{A}}\circ{\bv{A}})\bar{\bv{g}}$ term from \cref{eq:2_things_to_bound} using a Hanson-Wright type inequality for sub-exponential random variables due to \cite{GotzeSambaleSinulis:2021}.\footnote{The bound in \cite{GotzeSambaleSinulis:2021} is stated for \emph{symmetric} matrices, but it holds for all matrices without modification. In particular, for any $\bv{M}$,  $\bv{x}^T\bv{A}\bv{x} = \bv{x}^T\left(\frac{\bv{M} + \bv{M}^T}{2}\right)\bv{x}$, and by triangle inequality the symmetric matrix $\frac{\bv{M} + \bv{M}^T}{2}$ has Frobenius and spectral norm upper bounded by those of $\bv{A}$.} A similar bound is proven in \cite{Sambale:2020}.
\begin{lemma}
	\label{lem:general_hanson_wright}
	(Proposition 1.1 from \cite{GotzeSambaleSinulis:2021}) Let $\bv{x}$ be a random vector with i.i.d. mean $0$, variance $\sigma^2$ random entries that are sub-exponential with parameter $E$ and let $\bv{M}$ be any $n \times n$ matrix. For any $t > 0$ we have, 
	\begin{align*}
		\mathbb{P}\left(\left| \bv{x}^T\bv{M}\bv{x} - \sum_{i=1}^n \sigma^2 M_{ii} \right | \geq tE^2 \right) \leq 2 \exp \left( -c'' \min{\left( \frac{t^2}{\|\bv{M}\|_F^2}, \left(\frac{t}{\|\bv{M}\|_2}\right)^{1/2} \right)} \right).
	\end{align*}
\end{lemma}
To apply \Cref{lem:general_hanson_wright} to $\bar{\bv{g}}^T({\bv{A}}\circ{\bv{A}})\bar{\bv{g}}$, first note that $\bv{\bar{g}}$'s entries have variance $\sigma^2 \leq C K^4$ for some fixed constant $C$ because they are sub-exponential with parameter $c'K^2$.  So we have that $\sum_{i=1}^n \sigma^2 M_{ii} \leq CK^4 \sum_{i=1}^n A_{ii}^2 \leq CK^4 \|\bv{A}\|_F^2$.  Then plugging in $t = \frac{1}{c'^2}\gamma \|\bv{A}\|_F^2$ and using that $\|\bv{A}\|_F^4 = (\sum_{i,j}A_{ij}^2)^2 \geq \sum_{i,j}A_{ij}^4 = \|\bv{A}\circ \bv{A}\|_F^2 \geq \|\bv{A}\circ \bv{A}\|_2^2$,  we have that:
\begin{align*}
	\mathbb{P}\left(\left| \bv{\bar{g}}^T({\bv{A}}\circ{\bv{A}})\bv{\bar{g}} \right | \geq (\gamma K^4  + CK^4) \|\bv{A}\|_F^2 \right) \leq 2 e^{-c\gamma^{1/2}}, 
\end{align*}
for some fixed constant $c$ and any $\gamma \geq 0$. Under our assumption that $\gamma$ is larger than a fixed constant, we have that $CK^4 = O(\gamma K^4)$, so we can adjust the constant $c$ to simplify the expression to
\begin{align}
	\label{lem:sub_exp_hw2}
	\mathbb{P}\left(\left| \bv{\bar{g}}^T({\bv{A}}\circ{\bv{A}})\bv{\bar{g}} \right | \geq \gamma K^4 \|\bv{A}\|_F^2 \right) \leq 2 e^{-c\gamma^{1/2}}.
\end{align}

Plugging in  \cref{eq:sub_exp_bern2} and \cref{lem:sub_exp_hw2} to \cref{eq:2_things_to_bound} an applying a union bound, we conclude that:
\begin{align*}
\Pr[|{\bv{g}^2}^T ({\bv{A}}\circ{\bv{A}}) \bv{g}^2 | \geq  \gamma K^4 \|\bv{A}\|_F^2 +  2\gamma K^2 \|\bv{A}\|_F^2 + \|\bv{A}\|_F^2] \leq 4 e^{-c\gamma^{1/2}}.
\end{align*}
Since each entry of $\bv{g}$ has variance $1$, $K$ is greater than a fixed constant, so $\gamma K^4 \|\bv{A}\|_F^2 +  2\gamma K^2 \|\bv{A}\|_F^2 + \|\bv{A}\|_F^2 = O(\gamma K^4 \|\bv{A}\|_F^2)$. So again adjusting constants, we can simplify the above expression to claim that for any $\gamma \geq 0$,
\begin{align}
	\label{eq:final_tilde_A_bound}
	\Pr[\|\bv{\tilde{A}}\|_F^2  \geq  \gamma K^4 \|\bv{A}\|_F^2] \leq 2 e^{-c\gamma^{1/2}},
\end{align}
where we recall that $\|\bv{\tilde{A}}\|_F^2 = {\bv{g}^2}^T ({\bv{A}}\circ{\bv{A}}) \bv{g}^2$. 

We are now close to proving \cref{lem:gen_norm_bound}. To do so, we need to bound $\|\bv{g}\odot \bv{A}\bv{g}\|_2^2$, which as discussed, is exactly equal to Hutchinson's estimator applied to the positive semi-definite matrix $\bv{\tilde{A}}^T\bv{\tilde{A}}$ -- i.e., $\|\bv{g}\odot \bv{A}\bv{g}\|_2^2 = \bv{r}^T\bv{\tilde{A}}^T\bv{\tilde{A}}\bv{r}$ where $\bv{r}$ is a Rademacher random vector. Let $\bv{B}$ denote $\bv{B} = \bv{\tilde{A}}^T\bv{\tilde{A}}$. It follows from the Hanson-Wright inequality (see \cite{Vershynin:2018}, Theorem 6.2.1) that:
\begin{align*}
	\Pr\left[\left|\bv{r}^T\bv{B}\bv{r}-\tr(\bv{B})\right| \geq \gamma \|\bv{{B}}\|_F\right] \leq 2e^{-c\gamma}.
\end{align*}
Since $\bv{B}$ is PSD, we have that $ \|\bv{{B}}\|_F \leq \tr(\bv{B})$ and further we have that $\tr(\bv{B}) = \|\bv{\tilde{A}}\|_F^2$. So we can apply triangle inequality to conclude that $\Pr\left[\bv{r}^T\bv{B}\bv{r} \geq (\gamma + 1) \|\bv{\tilde{A}}\|_F^2\right] \leq 2e^{-c\gamma}$. Adjusting constants, it follows that for any $\gamma$,
\begin{align}
		\label{eq:almost_final_bound1}
	\Pr\left[\|\bv{g}\odot \bv{A}\bv{g}\|_2^2 \geq \gamma \|\bv{\tilde{A}}\|_F^2\right] \leq 2e^{-c\gamma}.
\end{align}
We combine this bound with \cref{eq:final_tilde_A_bound} to conclude that:
\begin{align}
	\label{eq:almost_final_bound}
	\Pr\left[\|\bv{g}\odot \bv{A}\bv{g}\|_2^2 \geq \gamma K^4\|\bv{A}\|_F^2 \right] \leq 2e^{-c\gamma^{1/3}}.
\end{align}
To obtain \cref{eq:almost_final_bound}, observe that to have $\|\bv{g}\odot \bv{A}\bv{g}\|_2^2 \geq  \gamma K^4\|\bv{A}\|_F^2$ it must be that \emph{either} $\|\bv{\tilde{A}}\|_F^2 \geq \gamma^{2/3} K^4 \|\bv{A}\|_F^2$ or that $\|\bv{g}\odot \bv{A}\bv{g}\|_2^2 \geq \gamma^{1/3} \|\bv{\tilde{A}}\|_F^2$. By \cref{eq:final_tilde_A_bound}, the first event only happens with probability $\leq 2 e^{-c\gamma^{1/3}}$ and by \cref{eq:almost_final_bound1} the second only happens with probability $\leq 2 e^{-c\gamma^{1/3}}$. Adjusting constants gives the equation. 

Finally, we return to equation \cref{eq:initial_split}, combining it with \cref{eq:almost_final_bound} to conclude that:
\begin{align*}
		\Pr\left[\|\bv{e}\|_2^2 \geq (2\gamma K^4 + 1)\|\bv{A}\|_F^2 \right] \leq 2e^{-c\gamma^{1/3}}.
\end{align*}
Again, since $K$ is greater than a fixed constant, we have that $\gamma K^4  = \Omega(1)$ and adjusting constants yields \cref{lem:gen_norm_bound}.
\end{proof}

\subsection{Completing the Proof}

We are finally ready to prove \cref{lem:subgauss_symmetric}, which we do by taking advantage of the symmetry of $\bv{d}^{m}_1(\bv{A}) -  \bv{d}^{m}_2$. Our proof uses a standard version of McDiamard's inequality (see e.g. \cite{Vershynin:2018}, Theorem 2.9.1), which we state below:
\begin{fact}[McDiarmid’s Inequality]
	\label{fact:mcdiamards}
	Let ${x}_1, \ldots, {x}_m \in \mathcal{X}_1 \times \ldots \times \mathcal{X}_m$ be independent random variables from domains $\mathcal{X}_1, \ldots, \mathcal{X}_m$.  Let $f:  \mathcal{X}_1 \times \ldots \times \mathcal{X}_m \rightarrow \R$ be any function such that for each coordinate $i$ and all realizations of ${x}_1, \ldots, {x}_m$, we have a difference bound of $\max_{\tilde{{x}}_i\in \mathcal{X}_i} \left | f({x}_1, \ldots, {x}_i, \ldots, {x}_m) - f({x}_1, \ldots, \tilde{{x}}_i, \ldots, {x}_m)\right | \leq c_i$.
	Then for any $t > 0$, 
	\begin{align*}
		\Pr\left[\left|f(\bv{x}_1, \ldots, \bv{x}_m) - \E f(\bv{x}_1, \ldots, \bv{x}_m)\right| \geq t\right] \leq 2e^{-\frac{2t^2}{\sum_{i=1}^m c_i^2}}.
	\end{align*}
\end{fact}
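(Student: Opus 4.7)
The plan is to prove McDiarmid's inequality via the standard Doob martingale argument combined with Azuma--Hoeffding. First I would set $Z_0 = \E f(\bv{x}_1, \ldots, \bv{x}_m)$ and, for $i = 1, \ldots, m$, define the conditional expectations $Z_i = \E[f(\bv{x}_1, \ldots, \bv{x}_m) \mid \bv{x}_1, \ldots, \bv{x}_i]$, so that $Z_m = f(\bv{x}_1, \ldots, \bv{x}_m)$. By the tower property, $(Z_i)_{i=0}^m$ is a martingale with respect to the natural filtration, and the quantity to control, $f - \E f$, telescopes as $Z_m - Z_0 = \sum_{i=1}^m D_i$ where $D_i = Z_i - Z_{i-1}$ is a martingale difference sequence.

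Next I would show the bounded differences condition lifts to a bounded range for each $D_i$ conditional on the past. Using independence of the $\bv{x}_j$'s, one can write $Z_i$ as an expectation over $\bv{x}_{i+1}, \ldots, \bv{x}_m$ only, with $\bv{x}_1, \ldots, \bv{x}_i$ treated as fixed, and similarly for $Z_{i-1}$ but additionally averaging over $\bv{x}_i$. Define
\begin{align*}
U_i &= \sup_{\tilde{x}_i \in \mathcal{X}_i} \E[f(\bv{x}_1, \ldots, \tilde{x}_i, \bv{x}_{i+1}, \ldots, \bv{x}_m) \mid \bv{x}_1, \ldots, \bv{x}_{i-1}] - Z_{i-1},\\
L_i &= \inf_{\tilde{x}_i \in \mathcal{X}_i} \E[f(\bv{x}_1, \ldots, \tilde{x}_i, \bv{x}_{i+1}, \ldots, \bv{x}_m) \mid \bv{x}_1, \ldots, \bv{x}_{i-1}] - Z_{i-1}.
\end{align*}
Then $L_i \leq D_i \leq U_i$ and, by the hypothesis that swapping coordinate $i$ changes $f$ by at most $c_i$ pointwise, one can show $U_i - L_i \leq c_i$ after averaging the pointwise difference bound over the later coordinates.

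The main step is then to bound the conditional moment generating function of each $D_i$ using Hoeffding's lemma: for any random variable $D$ supported on an interval of length $\ell$ with mean $0$, we have $\E[e^{\lambda D}] \leq e^{\lambda^2 \ell^2 / 8}$. Applied conditionally on $\bv{x}_1, \ldots, \bv{x}_{i-1}$, this yields $\E[e^{\lambda D_i} \mid \bv{x}_1, \ldots, \bv{x}_{i-1}] \leq e^{\lambda^2 c_i^2 / 8}$. Iteratively conditioning and applying the tower property across $i = m, m-1, \ldots, 1$ then gives $\E[e^{\lambda (Z_m - Z_0)}] \leq \exp\bigl(\lambda^2 \sum_{i=1}^m c_i^2 / 8\bigr)$.

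Finally I would apply Markov's inequality to $e^{\lambda (Z_m - Z_0)}$ to obtain $\Pr[Z_m - Z_0 \geq t] \leq \exp\bigl(-\lambda t + \lambda^2 \sum c_i^2 / 8\bigr)$, optimize over $\lambda > 0$ by choosing $\lambda = 4t / \sum c_i^2$, which gives the one-sided bound $\exp(-2t^2 / \sum c_i^2)$. Symmetry (applying the same argument to $-f$, which satisfies the same bounded differences condition) handles the lower tail, and a union bound over the two tails produces the factor of $2$ in the stated inequality. The main obstacle, and the only step that requires care, is establishing the conditional range bound $U_i - L_i \leq c_i$, since naively one only gets $|D_i| \leq c_i$ (a range of $2c_i$), which would lose a factor of $4$ in the exponent; handling this properly requires exploiting that $Z_{i-1}$ is itself an average over $\bv{x}_i$ and invoking Hoeffding's lemma rather than a cruder bound.
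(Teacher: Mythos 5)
Your proof is correct, and it is the standard argument. The paper does not prove this statement at all: it is quoted as a known Fact, citing Theorem 2.9.1 of \cite{Vershynin:2018}, and the Doob martingale decomposition plus Hoeffding's lemma that you describe is precisely the textbook proof behind that citation. You also correctly flag the one delicate point, namely that the conditional increment $D_i$ lies in an interval $[L_i, U_i]$ of length at most $c_i$ (not merely $|D_i| \leq c_i$), which is exactly what yields the constant $2$ in the exponent, so there is no gap.
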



\begin{proof}[Proof of \cref{lem:subgauss_symmetric}]
	For $z\in 1,\ldots, n$ and $i\in 1,2$, let $\bv{e}_{z,i}$ be a random variable distributed as $\bv{g}\odot \bv{A}\bv{g} - \diag(\bv{A})$. Let $\bv{w}_z = \bv{e}_{z,1} - \bv{e}_{z,2}$ and let $r_1, \ldots, r_m$ be i.i.d Rademacher random variables. By the symmetry of each $\bv{w}_z$, we can write:
	\begin{align*}
		\bv{d}^m_1 -\bv{d}^m_2 = \frac{1}{m}\sum_{z=1}^m r_z \bv{w}_z.
	\end{align*}
We condition on the random choice of $\bv{w}_1, \ldots, \bv{w}_z$ and apply McDiamard's inequality. Specifically, by triangle inequality, the function $f(r_1, \ldots, r_m) =  \left\|\frac{1}{m}\sum_{z=1}^m r_z \bv{w}_z\right\|_2 = 	\|\bv{d}^{m}_1(\bv{A}) -  \bv{d}^{m}_2\|_2$ can change by at most $2\|\bv{w}_z\|_2/m$ if we change the input $r_z$. So by \cref{fact:mcdiamards}, we have that:
\begin{align*}
	\Pr\left[\left| \left\|\bv{d}^m_1 -\bv{d}^m_2\right\|_2 - \E\left[ \left\|\bv{d}^m_1 -\bv{d}^m_2\right\|_2 \right]\right|\geq t\right] \leq 2e^{-\frac{m^2t^2}{2\sum_{z=1}^m \|\bv{w}_z\|_2^2}}.
\end{align*}
By triangle inequality, we have that $\E\left[ \|\bv{d}^m_1 -\bv{d}^m_2\|_2\right] \leq \E\left[ \|\bv{d}^m_1\|_2\right] + \E\left[ \| \bv{d}^m_2\|_2\right]$. Moreover, by \cref{lem:general_expectation_lemma}, and the fact that $\E[X]^2 \leq \E[X^2]$ for any random variable $X$, we have that $\E\left[\|\bv{d}^m_1\|_2\right] \leq cK^2\|\bv{A}\|_F/\sqrt{m}$ for a fixed constant $c$. So plugging in $t  =  \frac{1}{m}\sqrt{2\gamma\sum_{z=1}^m \|\bv{w}_z\|_2^2}$, overall we conclude that for any $\gamma \geq 0$,
\begin{align}
		\label{eq:mcdiamards_conclusion}
	\Pr\left[\left\|\bv{d}^m_1 -\bv{d}^m_2\right\|_2  \geq \sqrt{\frac{1}{m}}\left(K^2\|\bv{A}\|_F +  \sqrt{2\gamma\sum_{z=1}^m \|\bv{w}_z\|_2^2/m} \right)\right] \leq 2e^{-\gamma}.
\end{align}

With \cref{eq:mcdiamards_conclusion} in place, we are left to bound $\sum_{z=1}^m \|\bv{w}_z\|_2^2$. By triangle inequality, this sum can be upper bounded $2\sum_{z=1}^m \|\bv{e}_{z,1}\|_2^2 + \|\bv{e}_{z,2}\|_2^2$. By  \cref{lem:gen_norm_bound}, each $\bv{e}_{z,i}$ satisfies $\Pr\left[\|\bv{e}_{z,i}\|_2^2 \geq \gamma K^4 \|\bv{A}\|_F^2 \right] \leq 2e^{-c\gamma^{1/3}}$. So, following the characterization of generalized subexponential random variables from \cite{Sambale:2020} (see Proposition 5.1 in that work), we conclude that for a constant $c$, $\|\bv{e}_{z,i}\|_2^2$ is an $\alpha$-subexponential random variable\footnote{Note that this is different from a subexponential random variable with parameter $\alpha$, as in \cref{def:subexp}. An $\alpha$-subexponential random variable as defined by \cite{GotzeSambaleSinulis:2021,Sambale:2020} has slower asymptotic tail decay than a standard subexponential random variable when $\alpha < 1$.} for $\alpha = 1/3$, with parameter $cK^4 \|\bv{A}\|_F^2$. Applying Lemma A.3 from \cite{GotzeSambaleSinulis:2021}, we have that $\|\bv{e}_{z,i}\|_2^2 - \E[\|\bv{e}_{z,i}\|_2^2]$ is also $\frac{1}{3}$ subexponential  with parameter $c'K^4 \|\bv{A}\|_F$. We can then apply Corollary 1.4 from \cite{GotzeSambaleSinulis:2021} to conclude that for all $\beta \geq 0$,
\begin{align*}
	\Pr\left(\left|\sum_{z=1}^m\sum_{i=1,2} \|\bv{e}_{z,i}\|_2^2 - \E\left[\|\bv{e}_{z,i}\|_2^2\right]\right|  \geq m\cdot \beta K^4 \|\bv{A}\|_F^2\right) \leq 2e^{-c\min\left(\beta m, \beta^{1/3}m^{1/3}\right)}.
\end{align*}
By \cref{lem:general_expectation_lemma} we have that $\E\left[\|\bv{e}_{z,i}\|_2^2\right] \leq \frac{C}{2}K^4 \|\bv{A}\|_F^2$ for all $i,z$ and a constant $C$. So, applying triangle inequality, adjusting constants, and recalling that $\sum_{z=1}^m \|\bv{w}_z\|_2^2 \leq 2\sum_{z=1}^m\sum_{i=1,2} \|\bv{e}_{z,i}\|_2^2$, we conclude that: 
\begin{align}
	\label{eq:final_sum_bound}
	\Pr\left(\sum_{z=1}^m \|\bv{w}_z\|_2^2 \geq m\cdot (1 + \beta) K^4 \|\bv{A}\|_F^2\right) \leq 2e^{-c \beta^{1/3}m^{1/3}}.
\end{align}
Combining \cref{eq:final_sum_bound} with \cref{eq:mcdiamards_conclusion} and again adjusting constants we have that for constants $C,c$,
\begin{align*}
	\label{eq:final_sum_bound}
		\Pr\left[\left\|\bv{d}^m_1 -\bv{d}^m_2\right\|_2  \geq		 \sqrt{\frac{1}{m}}\left(1+\sqrt{\gamma(1 + \beta )}\right)K^2\|\bv{A}\|_F 
		\right] \leq 2e^{-\gamma} + 2e^{-c\beta^{1/3} m^{1/3}}.
\end{align*}
The right hand side of the inequality is $\leq \delta$ as long as $\gamma \geq \log(4/\delta)$ and $\beta \geq \frac{\log^3(4/\delta)/c^3}{m}$. Plugging in and adjusting constants proves \cref{lem:subgauss_symmetric}. 


\end{proof}

\end{document}